\newtheorem{theorem}{Theorem}[section]
\newtheorem{lemma}[theorem]{Lemma}
\newtheorem{corollary}[theorem]{Corollary}
\newtheorem{definition}[theorem]{Definition}
\newenvironment{acks}{\paragraph{Acknowledgments:}}{}
\newcommand{\tset}{{\mathcal T}}
\newcommand{\poly}{\operatorname{poly}}
\newcommand{\scp}{\mbox{\sf Sparsest-Cut}\xspace}
\newcommand{\chlamtac}{Chlamt{\'a}{\v{c}}\xspace}
\newcommand{\chlamtacetal}{\chlamtac et al.\@\xspace}
\newcommand{\cdiam}{\ensuremath{\delta}\xspace}
\newcommand{\lpcut}{\operatorname{\sf lpcut}\xspace}
\newcommand{\DeclarePairedDelimiter}[3]{
    \DeclareRobustCommand*{#1}[2][\auto]{%
        \ifx\auto##1\mathopen{}\left#2##2\right#3\mathclose{}\else##1#2##2##1#3\fi}%
    }%
\DeclarePairedDelimiter\set{\lbrace}{\rbrace}
\DeclarePairedDelimiter\paren{(}{)}
\newcommand{\aset}{{\mathcal A}}
\newcommand{\algcut}{\operatorname{\sf algcut}\xspace}
\newcommand{\diam}{\ensuremath{d}\xspace}
\newtheorem{observation}[theorem]{Observation}
\DeclareMathOperator{\cp}{cap}
\DeclareMathOperator{\dm}{dem}
\begin{document}

\title{Approximating Sparsest Cut in Low-Treewidth Graphs via Combinatorial Diameter}

\author{
  Parinya Chalermsook \thanks{Aalto University, Finland. {\bf email:} \texttt{parinya.chalermsook@aalto.fi}} \and
  Matthias Kaul \thanks{{Technische Universität Hamburg}, {Germany}. \textbf{email:} \texttt{matthias.kaul@tuhh.de}} \and
  Matthias Mnich \thanks{{Technische Universität Hamburg}, {Germany}. \textbf{email:} \texttt{matthias.mnich@tuhh.de}} \and
  Joachim Spoerhase \thanks{{Aalto University}, {Finland}. \textbf{email:} \texttt{joachim.spoerhase@aalto.fi}} \and
  Sumedha Uniyal \thanks{{Aalto University}, {Finland}.} \and
  Daniel Vaz \thanks{{Operations Research, Technische Universität M{\"u}nchen}, {Germany}. \textbf{email:} \texttt{daniel.vaz@tum.de}}
}

\maketitle

\begin{abstract}
  The fundamental sparsest cut problem takes as input a graph $G$ together with the edge costs and demands, and seeks a cut that minimizes the ratio between the costs and demands across the cuts.
  For $n$-node graphs~$G$ of treewidth~$k$, \chlamtac, Krauthgamer, and Raghavendra (APPROX 2010) presented an algorithm that yields a factor-$2^{2^k}$ approximation in time $2^{O(k)} \cdot \poly(n)$. 
  Later, Gupta, Talwar and Witmer (STOC 2013) showed how to obtain a $2$-approximation algorithm with a blown-up run time of $n^{O(k)}$. 
  An intriguing open question is whether one can simultaneously achieve the best out of the aforementioned results, that is, a factor-$2$ approximation in time $2^{O(k)} \cdot \poly(n)$.

   In this paper, we make significant progress towards this goal, via the following results:
   \begin{itemize}
     \item[(i)] A factor-$O(k^2)$ approximation that runs in time $2^{O(k)} \cdot \poly(n)$, directly improving the work of Chlamt{\'a}{\v{c}} et al.\ while keeping the run time single-exponential in $k$.
     \item[(ii)] For any $\varepsilon>0$, a factor-$O(1/\varepsilon^2)$ approximation whose run time is $2^{O(k^{1+\varepsilon}/\varepsilon)} \cdot \poly(n)$, implying a constant-factor approximation whose run time is nearly single-exponential in $k$ and a factor-$O(\log^2 k)$ approximation in time $k^{O(k)} \cdot \poly(n)$.
  \end{itemize}
  Key to these results is a new measure of a tree decomposition that we call \emph{combinatorial diameter}, which may be of independent interest. 
\end{abstract}

\section{Introduction}
\label{sec:intro}
In the sparsest cut problem, we are given a graph together with costs and demands on the edges, and our goal is to find a cut that minimizes the ratio between the costs and demands across the cut. 
Sparsest cut is among the most fundamental optimization problems that has attracted interests from both computer scientists and mathematicians. Since the problem is $\mathsf{NP}$-hard~\cite{MatulaS90}, the focus has been to study approximation algorithms for the problem.
Over the past four decades, several breakthrough results have eventually culminated in a factor-$\tilde O(\sqrt{\log n})$ approximation in polynomial time~\cite{arora2008euclidean,arora2009expander,leighton1999multicommodity}.
On the lower bound side, the problem is $\mathsf{APX}$-hard~\cite{chuzhoy2009polynomial} and, assuming the Unique Games Conjecture, does not admit any constant-factor approximation in polynomial time~\cite{chawla2006hardness}. 

The extensive interest  in  sparsest cuts stems from both applications and mathematical reasons.
From the point of view of applications, the question of partitioning the universe into two parts while minimizing the ``loss'' across the interface is crucial in any divide-and-conquer approach e.g., in image segmentation.
From a mathematical/geometric viewpoint, the integrality gap of convex relaxations for sparsest cuts is equivalent to the embeddability of  any finite metric space  (for LP relaxation) and  of any negative-type metric (for SDP relaxation)\footnote{A metric $(X,d)$ is said to be \textit{negative type}, if $(X,\sqrt{d})$ embeds isometrically into a Hilbert space.} into $\ell_1$.
Therefore, it is not a surprise that this problem has attracted interest from both computer science and mathematics (geometry, combinatorics, and functional analysis) communities.

The study of sparsest cuts in the low-treewidth regime was initiated in 2010 by \chlamtac, Krauthgamer, and Raghavendra~\cite{ckr10}, who devised a factor-$2^{2^k}$ approximation algorithm (CKR) that runs in time $2^{O(k)} \cdot \poly(n)$, with $k$ being the treewidth of the input graph.
Later, Gupta, Talwar and Witmer~\cite{gtw13} showed how to obtain a factor-$2$ approximation (GTW) with a blown-up run time of $n^{O(k)}$; they further showed that there is no $(2-\varepsilon)$-approximation for any $\varepsilon > 0$ on constant-treewidth graphs, assuming the Unique Games Conjecture.
It remains an intriguing open question whether one can simultaneously achieve the best run time and approximation factor.
In particular, in this paper we address the following question: 
\begin{quote}
  Does \scp admit a factor-$2$ approximation that runs in time $2^{O(k)} \cdot \poly (n)$? 
\end{quote}

\paragraph{Broader perspectives.}
Given the significance of sparsest cuts, a lot of effort have been invested into understanding when sparsest cut instances are ``easy''. 
In trees, optimal sparsest cuts can be found in polynomial time~(see e.g.~\cite{GuptaL19}).
For many other well-known graph classes, finding optimal sparsest cuts is $\mathsf{NP}$-hard, so researchers attempted to find constant-factor approximations in polynomial time.
They have succeeded, over the past two decades, for several classes of graphs, such as outerplanar, $\ell$-outerplanar, bounded-pathwidth and bounded-treewidth graphs~\cite{gtw13,gupta2004cuts,chekuri2006embedding,ckr10,lee2013pathwidth}, as well as planar graphs~\cite{cohen2021quasipolynomial}. 

As mentioned earlier, sparsest cuts are not only interesting from the perspective of algorithm design, but also from the perspectives of geometry, probability theory and convex optimization.  
Indeed, the famous conjecture of Gupta, Newman, Rabinovich, and Sinclair~\cite{gupta2004cuts} postulates that any minor-free graph metric embeds into $\ell_1$ with a constant distortion, which would imply that all such graphs admit a constant approximation for the sparsest cut problem.
The conjecture has been verified in various graph classes~\cite{lee2013pathwidth,chekuri2006embedding}, but remains open even for bounded-treewidth graph families. 

To us, perhaps the most interesting aspect of the treewidth parameter~\cite{ckr10,gtw13} is its connection to  the power of hierarchies of increasingly tight convex relaxations (see, for instance, the work by Laurent~\cite{laurent03}).
In this setting, a straightforward (problem-independent!)  LP rounding algorithm performs surprisingly well for many ``combinatorial optimization'' problems.  
It has been shown to achieve optimal solutions for various fundamental problems in bounded treewidth graphs~\cite{magen2009robust,bienstock2004tree,wainwright2004treewidth} and match the (tight) approximation factors achievable on trees for problems such as {group Steiner tree}~\cite{chalermsook2017beyond,chalermsook2018survivable,garg2000polylogarithmic,halperin2003polylogarithmic}.   
In this way, for these aforementioned problems, such a problem-oblivious LP rounding algorithm provides a natural framework to generalize an optimal algorithm on trees to nearly-optimal ones on low (perhaps super-constant) treewidth graphs. 
Our work can be seen as trying to develop such  understanding in the context of the sparsest cut problem.

\subsection{Our Results}
We present several results that may be seen as an intermediate step towards the optimal result. 
Our main technical results are summarized in the following theorem. 

\begin{theorem}
\label{thm:main-intro} 
  For the following functions $t$ and $\alpha$, there are algorithms that run in time $t(k)\cdot \poly(n)$ and achieve approximation factors $\alpha(k)$ for the sparsest cut problem: 
  \begin{itemize}
    \item $t(k) = 2^{O(k)}$ and $\alpha(k) = O(k^2)$. 
    \item $t(k) = 2^{O(k^2)}$ and $\alpha(k) = O(1)$.       
    \item For any $\varepsilon > 0$, $t(k) = \exp\paren{O(\frac{k^{1+\varepsilon}}{\varepsilon})}$ and $\alpha(k) = O(1/\varepsilon^2)$. 
  \end{itemize}
\end{theorem}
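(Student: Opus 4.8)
The plan is to build a tree-decomposition-based rounding framework around the new *combinatorial diameter* parameter, and to instantiate it three times with different parameter trade-offs. Let me sketch the pieces.

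\paragraph{Overview.} I would obtain all three bounds from one framework built around the combinatorial diameter, used to cut the tree decomposition into pieces that are small enough to be handled essentially exactly; a parameter \cdiam governs the piece size, and the three $(t,\alpha)$ pairs come out by optimizing the resulting trade-off --- the cheap end (no clustering) giving the $O(k^2)$ bound in $2^{O(k)}\poly(n)$, the expensive end (pieces of ``size'' $\Theta(k)$) giving $O(1)$ in $2^{O(k^2)}\poly(n)$, and the intermediate choices giving the $\varepsilon$-family.

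\paragraph{Relaxation and reduction.} By binary search on the optimal sparsity $\lambda^\ast$ it suffices, for a guessed value $\lambda$, either to return a cut of sparsity $O(\alpha(k))\cdot\lambda$ or to certify $\lambda^\ast>\lambda$. Fix a width-$k$ tree decomposition $(T,\{B_t\})$, and first reduce it by a standard transformation to bounded degree. I would then work with the relaxation of~\cite{ckr10}: a Sherali--Adams-type LP of size $2^{O(k)}\cdot\poly(n)$ that maintains, for each bag, a local law of a cut of that bag together with the induced separation probabilities $p_{uv}$ of the edges inside it (every edge of $G$ lies inside some bag), plus consistency conditions along the adhesions, normalized so that $\sum_{uv}\dm(uv)\,p_{uv}\ge1$ and $\sum_{uv}\cp(uv)\,p_{uv}$ is minimized. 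Every genuine cut is a feasible integral point, so the LP value is at most $\lambda^\ast$; checking this and solvability in $2^{O(k)}\poly(n)$ time is the first, routine, step.

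\paragraph{Rounding via combinatorial-diameter pieces.} Given an optimal fractional solution I would produce a distribution $\mathcal D$ over cuts of $G$ by rooting $T$ and extending the cut bag by bag from the root downward, each time consistently with the part already fixed above. The loss is entirely ``drift'': over a long chain of extensions the law of the cut on a far bag moves away from its LP marginal, by an amount governed by the combinatorial distance between bags. The technical core is a lemma quantifying this, to the effect that if one only glues bags within combinatorial distance \cdiam then $\mathcal D$ preserves separation probabilities up to a factor $h(\cdiam)$, i.e.\ $\tfrac1{h(\cdiam)}\,p_{uv}\le\Pr_{\mathcal D}[u,v\text{ separated}]\le p_{uv}$ for all $uv\in E$, with $h$ shrinking from $O(k^2)$ down to $O(1)$ as \cdiam runs from a constant up to $\Theta(k)$. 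To use it, partition $T$ into connected pieces of combinatorial diameter $\le\cdiam$; inside a piece do not extend naively but solve the induced sub-instance optimally given the cut pattern already fixed on its interface to the parent, by brute-force enumeration over the $\exp(O(k\cdot q(\cdiam)))$ relevant cut patterns of the piece (here $q(\cdiam)$ bounds the number of bags in a piece, so the total time is $\exp(O(k\cdot q(\cdiam)))\cdot\poly(n)$), paying the drift only across the $O(1)$ interfaces between neighboring pieces. The mediant inequality then converts the embedding guarantee into the approximation bound, $\min_S\tfrac{\cp(S)}{\dm(S)}\le\tfrac{\mathbb E_{\mathcal D}[\cp(S)]}{\mathbb E_{\mathcal D}[\dm(S)]}\le h(\cdiam)\cdot\lambda^\ast$, so $\mathcal D$ supports a cut of sparsity $O(h(\cdiam))\cdot\lambda^\ast$. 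The three bullets now follow by taking \cdiam constant (no clustering, $h=O(k^2)$, time $2^{O(k)}\poly(n)$), $\cdiam=\Theta(k)$ ($h=O(1)$, time $2^{O(k^2)}\poly(n)$), and \cdiam chosen so that $k\cdot q(\cdiam)=O(k^{1+\varepsilon}/\varepsilon)$ while $h(\cdiam)=O(1/\varepsilon^2)$ --- whence also the $k^{O(k)}\poly(n)$-time $O(\log^2 k)$-approximation at $\varepsilon=1/\log k$.

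\paragraph{Main obstacle.} I expect the difficulty to be concentrated in the drift lemma: proving that local consistency along the bags forces only a \emph{multiplicative}, $h(\cdiam)$-factor distortion of the separation probabilities --- equivalently, a simultaneous near-preservation of both the \cp-cost and the \dm-separation of the rounded cut --- and pinning down that $h$ decays from $\poly(k)$ to $O(1)$ as the pieces grow, together with the right bookkeeping for $q(\cdiam)$ after the bounded-degree reduction. It is exactly this polynomial (rather than exponential, as in the iterated conditioning of~\cite{ckr10}) dependence on the combinatorial diameter that turns $2^{2^k}$ into $\poly(k)$ and, for larger pieces, into $O(1)$.
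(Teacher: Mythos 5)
Your high-level plan---use a Sherali--Adams LP, round bag-by-bag \`a la \chlamtacetal, control the accumulated ``drift'' by the new combinatorial-diameter parameter, and trade that parameter against the LP level to get the three bullets---is the right one, and the observation that $\varepsilon=1/\log k$ gives a $k^{O(k)}$-time $O(\log^2 k)$-approximation matches the paper. However, the proposal as written has two gaps that are exactly where the paper's content lives.

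First, the central claim ``$h$ shrinking from $O(k^2)$ down to $O(1)$ as $\cdiam$ runs from a constant up to $\Theta(k)$'' does not follow from the partitioning scheme you describe. If you start from a balanced tree decomposition of depth $O(\log n)$ and width $O(k)$ and split it into constant-diameter pieces, a demand pair is separated by $\Theta(\log n)$ interfaces in the worst case, so the rounding error of \Cref{thm:algo:main} is $O(\log^2 n)$, not $O(k^2)$. The paper does not achieve the $O(k^2)$ bullet with ``$\cdiam$ constant''; it achieves it with combinatorial diameter $O(k)$ (not constant!), via the \emph{bridges} construction (\Cref{lem:bridges:diam}): each node's bag is enlarged to absorb the $\lambda$ bags on the path to its synchronization ancestor, which simultaneously drives the combinatorial diameter down to $O(\diam/\lambda)=O(k)$ \emph{and} keeps the width at $O(\lambda k + k)=O(\log n + k)$. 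None of that structure is present in your writeup, and without it the $(t,\alpha)$ pairs do not come out. The same issue recurs in the other two bullets: the \emph{highways} trick of copying all synchronization bags along the root path (giving diameter $3$ at width $O(\log n+k^2)$) and the layered \emph{super-highways} (giving diameter $O(q)$ at width $O(\log n+qk^{1+1/q})$) are nontrivial and are the actual content of \Cref{thm:main-intro}; a generic ``partition $T$ into pieces of combinatorial diameter $\le\cdiam$ and brute-force each piece'' does not produce them.

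Second, the bookkeeping on piece sizes is off. A connected subtree of depth $q$ in a bounded-degree tree can have $2^{\Theta(q)}$ bags, so ``$q(\cdiam)$ bounds the number of bags in a piece'' is not automatic, and the claimed time $\exp(O(k\cdot q(\cdiam)))\poly(n)$ is not a consequence of a diameter bound alone. The paper sidesteps this by enlarging bags only along root-directed \emph{paths} (never subtrees), which is why the width bounds are linear in $\lambda$ (resp.\ in $q k^{1/q}$). Relatedly, ``solve the induced sub-instance optimally given the interface'' is not quite the right operation for sparsest cut (the objective is a ratio and does not localize); the correct operation is to sample the piece consistently with the lifted LP marginals, which is what \Cref{alg:Chlamtac} does on the modified decomposition. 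Finally, the ``drift lemma'' you flag as the main obstacle is \Cref{thm:algo:main} in the paper; its proof (via the Markov flow graph and potential argument of \Cref{sec:markov}) genuinely has to be supplied, together with \Cref{lem:distpreservereduction} showing that bypassing redundant bags does not change the rounding distribution---neither is stated or proved in your proposal.
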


Our first result directly improves the approximation factor of $2^{2^k}$ by \chlamtacetal, while keeping the run time single-exponential in $k$.
Our second result shows that, with only slightly more exponential run time, one can achieve a constant approximation factor.
Compared to Gupta et al., our result has a constant blowup in the approximation factor (but independent of $k$), but has a much better run time ($2^{O(k^2)}$ instead of $n^{O(k)}$); compared to \chlamtacetal, our result has a much better approximation factor ($O(1)$ instead of $2^{2^k}$), while maintaining nearly the same asymptotic run time.

Finally, our third result gives us an ``approximation scheme'' whose run time exponent converges to a single exponential, while keeping an approximation factor a constant. 
We remark that, by plugging in $\varepsilon = \Omega(1/\log k)$, we obtain a factor-$O(\log^2 k)$ approximation in time $k^{O(k)}\cdot\poly(n)$.

\subsection{Overview of Techniques} 
Now, we sketch the main ideas used in deriving our results.
We assume certain familiarity with the notions of treewidth and tree decomposition. 
Let $G$ be a graph with treewidth $k$ and $\tset$ be a tree decomposition of $G$ with a collection of bags $B_t \subseteq V(G)$ for all $t \in V(G)$.
Define the width of $\tset$ as $w(\tset) = \max_{t \in V(\tset)} |B_t| -1$. 

The run time of algorithms that deal with the treewidth parameter generally depend on $w(\tset)$, so when designing an algorithm in low-treewidth graphs, one usually starts with a near-optimal tree decomposition in the sense that $w(\tset) = O(k)$. 
To give a concrete example, the CKR algorithm~\cite{ckr10} for sparsest cut runs in time $2^{O(w(\tset))} \cdot \poly(n)$ and gives approximation factor $2^{2^{w(\tset)}}$ . 
Observe that, with slightly higher width $w(\tset) = O(\log n + \beta(k))$, the CKR algorithm would run in time $2^{\beta(k)} \cdot \poly (n)$. 

Our results are obtained via the concept of {\bf combinatorial diameter} of a tree decomposition.
Informally, the combinatorial length between $u$ and $v$ in $\tset$ measures the number of ``non-redundant bags'' that lie on the unique path in $\tset$ connecting the bags of $u$ and $v$. 
We say that the combinatorial diameter $\Delta(\tset)$ of $\tset$ is at most $d$ if the combinatorial length of every pair of vertices is at most $d$. Please refer to \Cref{sec:combinatorial diameter} for formal definitions.  

Our first key technical observation shows that the approximation factor of the CKR algorithm can be upper bounded in terms of the combinatorial diameter $\min \{ O(\Delta(\tset)^2), 2^{2^{w(\tset)}}\}$. Moreover, in the special case of $\Delta(\tset) = 1$, the CKR algorithm gives a $2$-approximation, which can be seen by using the arguments of Gupta et al.~\cite{gtw13}. 
Therefore, to obtain a fast algorithm with a good approximation factor, it suffices to prove the existence of a tree decomposition with simultaneously low $w(\tset)$ and low $\Delta(\tset)$. 
We remark that standard tree decomposition algorithms~\cite{bodlaender2016c} give us $w(\tset) = O(k)$ and $\Delta(\tset) = O(\log n)$, so this observation alone does not immediately lead to improved algorithmic results. 
However, it allows us to view the results from CKR~\cite{ckr10} and GTW~\cite{gtw13} in the same context: CKR applies the algorithm to the tree decomposition $\tset_{CKR}$ with $\Delta(\tset_{CKR}) = O(\log n)$ and $w(\tset_{CKR}) = O(k)$, while GTW applies the same algorithm with $\Delta(\tset_{GTW}) = 1$ and $w(\tset_{GTW}) = O(k \log n)$. 
In other words, the same algorithm is applied to two different ways of decomposing the input graph~$G$ into a tree.

In this paper, we present several new tree decomposition algorithms that optimize the tradeoff between $w(\tset)$ and $\Delta(\tset)$. 
Our first algorithm gives a tree decomposition $\tset_1$ with $w(\tset_1) = {O(\log n+k)}$ and $\Delta(\tset_1) = O(k)$, which leads to a factor-$O(k^2)$ approximation in time $2^{O(k)} \cdot \poly(n)$; this directly improves the approximation factor of CKR while maintaining the same asymptotic run time. 
Our second algorithm gives the tree $\tset_2$ with $w(\tset_2) = O(\log n + k^2)$ and $\Delta(\tset_2) = 4$.
This leads to an algorithm for sparsest cut with run time $2^{O(k^2)} \cdot \poly(n)$ and approximation factor $O(1)$. 
Our third algorithm is an approximation scheme which is further parameterized by $\varepsilon >0$.
In particular, for any $\varepsilon>0$, we construct the tree $\tset_{3,\varepsilon}$ such that  $w(\tset_{3,\varepsilon}) = O(\log n+ k^{1+\varepsilon}/\varepsilon)$ and $\Delta(\tset_{3,\varepsilon}) = O(1/\varepsilon)$. 

\subsection{Conclusion \& Open Problems}
Our work is an attempt to simultaneously obtain the best run time and approximation factor for sparsest cut in the low-treewidth regime.
Our research question combines the flavors of two very active research areas, namely parameterized complexity and approximation algorithms. 
We introduce a new measure of tree decomposition called combinatorial diameter and show various constructions with different tradeoffs between $w(\tset)$ and $\Delta(\tset)$. 
We leave the question of getting $2$-approximation in $2^{O(k)} \cdot \poly(n)$ time as the main open problem.
One way to design such an algorithm is to show an existence of a tree decomposition with $w(\tset) = O(\log n +k)$ and $\Delta(\tset) = 2$.
An interesting intermediate step would be to show $w(\tset) = O(\log n +f(k))$ for some function $f$ and $\Delta(\tset) = 2$, which would imply a fixed-parameter algorithm that yields a $2$-approximation.

Another interesting question is to focus on polynomial-time algorithms and optimize the approximation factor with respect to treewidth.
In particular, is there an $O(\log^{O(1)} k)$ approximation in polynomial time?
This question is open even for the \emph{uniform} sparsest cut problem (unit demand for every vertex pair), for which a fixed-parameter algorithm~\cite{BonsmaBPP2012} but no polynomial-time algorithm is known.

A broader direction that would perhaps complement the study along these lines is to improve our understanding on  a natural LP-rounding algorithm on the lift-and-project convex programs in general.
For instance, can we prove a similar tradeoff result for other combinatorial optimization problems in this setting?
One candidate problem is the {\em group Steiner tree} problem, for which a factor-$O(\log^2 n)$ approximation in time $n^{O(k)}$ is known (and the algorithm there is ``the same'' algorithm as used for finding sparsest cuts).
Can we get a factor-$O(\log^2 n)$ approximation in time $2^{O(k)} \cdot \poly(n)$?

\paragraph{Independent Work:} Independent of our work, Cohen-Addad, M\"{o}mke, and Verdugo~\cite{tobias} obtained a $2$-approximation algorithm for sparsest cut in treewidth $k$ graph with running time $2^{2^{O(k)}} \cdot \text{poly}(n)$.
Observe that their result is incomparable with our result: they obtain a better approximation factor, whereas the obtained running time is considerably larger than ours.
Similar to our result, they build on the techniques from \cite{ckr10,gtw13}.

\section{Preliminaries}
\paragraph{Problem Definition} In the \scp problem (with general demands), the input is a graph $G=(V, E_G)$ with positive edge capacities $\left \{\cp_e \right \}_{e \in E_G}$ and a demand graph $D =(V, E_D)$ (on the same set of vertices) with positive demand values $\left \{\dm_e \right \}_{e \in E_D}$. The aim is to determine
\[
\Phi_{G, D} := \min_{S \subseteq V} \Phi_{G,D}(S), \quad\quad \Phi_{G,D}(S) := \frac{\sum_{e \in E_G(S, V-S)}\cp_e}{\sum_{e \in E_D(S, V-S)} \dm_e}.\]
The value $\Phi_{G,D}(S)$ is called the \emph{sparsity} of the cut $S$.

\paragraph{Tree decomposition}
Let $G=(V, E)$ be a graph. A tree decomposition $(\tset, \set{B_t}_{t \in V(\tset)})$ of $G$ is a tree $\tset$ together with a collection of \emph{bags} $\{B_t\}_{t \in V(\tset)}$, where the bags $B_t \subseteq V(G)$ satisfy the following properties:
\begin{itemize}
	\item $V(G) = \bigcup_t B_t$.
	\item For any edge $uv \in E(G)$, there is a bag $B_t$ containing both $u$ and $v$.
	\item For each vertex $v \in V(G)$, the collection of bags that contain $v$ induces a connected subgraph of $\tset$.
\end{itemize}
The treewidth of graph $G$ is defined as the minimum integer $k$ such that there exists a tree decomposition where each bag contains at most $k+1$ vertices.

We generally use $r$ to denote the root of $\tset$, and $p\colon V(\tset) \to V(\tset)$ for the parent of a node with respect to root $r$.  %
We sometimes refer to $B_{p(i)}$ as the \emph{parent bag} of $B_i$. %
We denote by $\tset_{i \leftrightarrow j}$
the set of nodes on the unique path in tree $\tset$ between nodes $i, j \in
V(\tset)$ (possibly $i=j$). %
For a set $X \subseteq V(\tset)$ of bags, we use the shorthand $B(X) = \bigcup_{i \in X} B_i$ (the union of bags for nodes in $X$).

We will treat cuts in a graph as assignments of $\{0,1\}$ to each vertex, and fix some corresponding notation.
\begin{definition}
	Let $X$ be some finite set.
	An \emph{$X$-assignment} is a map $f\colon X \to \{0,1\}$.
	We denote by~$\mathcal{F}[X]$ the set of all $X$-assignments.
	For some distribution $\mu$ over $\mathcal{F}[X]$ and set $Y \subseteq X$ we define~$\mu|_Y$ to be the distribution given by
	\[
	\Pr_{f \sim \mu|_Y}[f = f'] = \Pr_{f\sim \mu}[f|_Y = f'] \quad \forall f' \in \mathcal{F}[Y] \enspace .
	\]
\end{definition}

\section{Algorithm and Combinatorial Diameter}
Our approach is based on the new relation between the algorithm of \chlamtacetal~\cite{ckr10} and our novel notion of ``combinatorial diameter''.
In \Cref{sec:combinatorial diameter}, we present the definition of the combinatorial diameter.
The subsequent sections give the description of \chlamtacetal and prove the relation to the combinatorial diameter.  

\subsection{Our New Concept: Combinatorial Diameter} 
\label{sec:combinatorial diameter} 
\begin{definition}[Redundant bags]
  Fix $s,t \in V(\tset)$. 
  Let $v \in V(\tset)$ be a bag with exactly two neighbors $u$ and $w$ on the path $\tset_{s \leftrightarrow t}$. When $B_v \cap B_w \subseteq B_u$, we say that $v$ is \emph{$(s,t)$-redundant}.
\end{definition}

Intuitively, each node $v$ discarded in the fashion above can be thought of as a subset of $u$, since the vertices $B_v \setminus B_u$ occur only in $B_v$ within $\tset_{s \leftrightarrow t}$. %
As a consequence, we can show that they do not affect the rounding behaviour of the CKR algorithm with respect to $s$ and $t$ (therefore ``redundant''). 

\begin{definition}[Simplification]
\label{def:simplification}
	Let $\tset$ be a tree decomposition, and $s,t \in V(\tset)$. We say $\tset_{s \leftrightarrow t}$ has \emph{combinatorial length} at most $\ell$ if it can be reduced to a path of length at most $\ell$ by repeatedly applying the following rule:
	\begin{itemize}
    \item[] Delete an $(s,t)$-redundant node $v$ on path $\tset_{s \leftrightarrow t}$, and add the edge~$\set{u,w}$. %
      We call this operation \emph{bypassing} $v$.
	\end{itemize}
	We call any path $P$ generated from $\tset_{s\leftrightarrow t}$ in this fashion a \emph{simplification} of $\tset_{s \leftrightarrow t}$.
\end{definition}

\begin{definition}[Combinatorial diameter]
  The \emph{combinatorial diameter} of $\tset$ is defined to be the minimum $\cdiam$ such that, for all $u,v$, the path $\tset_{u \leftrightarrow v}$ has combinatorial length at most $\cdiam$.    
\end{definition}

\subsection{Algorithm Description and Overview}
For completeness, we restate the essential aspects of the algorithm by \chlamtacetal~\cite{ckr10}.
The algorithm is initially provided a \scp instance $(G,D,\cp,\dm)$ alongside a  tree decomposition $\tset$ of $G$ with the width $w(\tset) = \max_{t} |B_t| - 1$.
The goal is then to compute a cut in $G$ that has low sparsity.

The algorithm starts by computing, for every vertex set $L = B_i \cup \set{s,t}$, consisting of a bag $B_i$ and a pair of vertices $s,t \in V(G)$, a distribution $\mu_L$ over $L$-assignments. %
This collection of distributions for all sets $L$ satisfies the requirement that any two distributions agree on their joint domains, i.e.~$\mu_L|_{L \cap L'} = \mu_{L'}|_{L\cap L'}$ for each pair of sets $L, L'$ with the structure above. %

 If we denote $\lpcut(s,t) = \Pr_{f \sim \mu_{B \cup \{s,t\}}}[f(s) \neq f(t)]$ for any $s,t \in V(G)$, and an arbitrary bag $B$ of~$\tset$, we can compute the collection of distributions that minimizes 
\[
	\dfrac{\sum_{\{s,t\} \in E_G}\cp_{\{s,t\}} \cdot \lpcut(s,t)   }{\sum_{\{s,t\} \in E_D}\dm_{\{s,t\}} \cdot \lpcut(s,t)   } \enspace . 
\]
Notice that $\lpcut$ is well-defined by the consistency requirement, since the choice of $B$ does not impact the distribution over $\{s,t\}$-assignments.
For ease of notation, we will refer to the implied distribution over some vertex set $X \subseteq B \cup \{s,t\}$ by $\mu_X$, where formally $\mu_X = \mu_{B \cup \{s,t\}}|_{X}$.

Such a collection of distributions can be computed in time $2^{O(w(\tset))}\poly(n)$, using Sherali-Adams LP hierarchies, which motivates the function name $\lpcut$.
It is then rounded to some $V(G)$-assignment $f$ using \autoref{alg:Chlamtac}.
We now recall a number of useful results about the algorithm and the assignment it computes.
Details about the algorithm and the attendant lemmas can be found in the work of \chlamtacetal~\cite{ckr10}.

Denote by $\aset$ the distribution over $V(G)$-assignments produced by the algorithm.
\begin{algorithm}[t]
	\SetAlgoLined
	\KwData{$G, (\tset, \set{B_i}_{i \in V(\tset)}), \{\mu_L\}$}

	Start at any bag $B_0$, sample $f|_{B_0}$ from $\mu_{B_0}$\;
	We process the bags in non-decreasing order of distance from $B_0$ \;
	\ForEach{Bag $B$ with a processed parent bag $B'$}{
		Let $B^+ = B\cap B'$ the subset of $B$ on which $f$ is fixed.
		Let $B^- := B\setminus B^+$.
		Sample $f|_{B^-}$ according to
		\[
			\Pr[f|_{B^-} = f'] = \Pr_{f^* \sim \mu_B}[f^*|_{B^-} = f'\; \mid \; f^*|_{B^+} =f|_{B^+} ] \quad\forall f' \in \mathcal{F}[B^-]
		\]

	}

	\KwResult{$f$}
	\caption{Algorithm \textsc{SC-Round} }
	\label{alg:Chlamtac}
\end{algorithm}

\begin{lemma}[\cite{ckr10}, Lemma 3.3]
	\label{lem:bagRealisation}
	For every bag $B$ the assignment $f|_B$ computed by \autoref{alg:Chlamtac} is distributed according to $\mu_B$, meaning $ \Pr_{ f \sim \aset}[f|_B = f'] = \Pr_{f^* \sim \mu_B}[f^* = f']$ for all $f' \in \mathcal{F}[B]$.
\end{lemma}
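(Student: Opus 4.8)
The statement asserts that for every bag $B$, the restricted assignment $f|_B$ output by \autoref{alg:Chlamtac} is distributed exactly according to $\mu_B$. My plan is to prove this by induction on the order in which bags are processed, i.e.\ by (non-decreasing) distance from the starting bag $B_0$ in the tree $\tset$. The key structural fact I will exploit is that $\tset$ is a tree, so when we process a bag $B$ with parent bag $B'$, the set $B^+ = B \cap B'$ is exactly the ``separator'' through which $B$ communicates with everything processed so far; by the connectivity property of tree decompositions, no vertex of $B^- = B \setminus B^+$ appears in any previously processed bag.

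\textbf{Base case.} For $B = B_0$, the algorithm samples $f|_{B_0}$ directly from $\mu_{B_0}$, so the claim holds trivially.

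\textbf{Inductive step.} Suppose the claim holds for all bags processed before $B$, and in particular for the parent $B'$, so $f|_{B'} \sim \mu_{B'}$. I want to show $f|_B \sim \mu_B$. Write $B = B^+ \sqcup B^-$ with $B^+ = B \cap B'$. For any target assignment $g \in \mathcal{F}[B]$, decompose $g = g^+ \sqcup g^-$ with $g^+ = g|_{B^+}$ and $g^- = g|_{B^-}$. The algorithm first has $f|_{B^+}$ already fixed (it is a sub-assignment of $f|_{B'}$), and then samples $f|_{B^-}$ from the conditional distribution $\Pr_{f^* \sim \mu_B}[f^*|_{B^-} = g^- \mid f^*|_{B^+} = f|_{B^+}]$. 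Hence
\[
\Pr[f|_B = g] = \Pr[f|_{B^+} = g^+] \cdot \Pr_{f^* \sim \mu_B}\bigl[f^*|_{B^-} = g^- \;\big|\; f^*|_{B^+} = g^+\bigr].
\]
By the consistency requirement on the collection $\{\mu_L\}$, the marginal $\mu_B|_{B^+}$ equals $\mu_{B'}|_{B^+}$ (both $B$ and $B'$ arise as bags, or one can route through a common $L$), so $\Pr_{f^* \sim \mu_B}[f^*|_{B^+} = g^+] = \Pr_{f^* \sim \mu_{B'}}[f^*|_{B^+} = g^+]$. The induction hypothesis gives $\Pr[f|_{B'} = \cdot] = \mu_{B'}$, hence $\Pr[f|_{B^+} = g^+] = \Pr_{f^* \sim \mu_{B'}}[f^*|_{B^+} = g^+] = \Pr_{f^* \sim \mu_B}[f^*|_{B^+} = g^+]$. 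Substituting and using the definition of conditional probability collapses the product to $\Pr_{f^* \sim \mu_B}[f^* = g]$, as desired.

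\textbf{Main obstacle.} The only genuine subtlety is justifying that the value $f|_{B^+}$ being conditioned on is distributed correctly — this is where I need that $B^+ \subseteq B'$ and that the $B'$-marginal of $f$ has already been ``certified'' as $\mu_{B'}$ by induction — together with the cross-consistency $\mu_B|_{B^+} = \mu_{B'}|_{B^+}$. One must also be slightly careful that the randomness used to sample $f|_{B^-}$ is fresh and independent of everything else, so that the conditioning identity $\Pr[A \cap C] = \Pr[A]\Pr[C\mid A]$ applies verbatim; this is immediate from the algorithm's description but should be stated. Everything else — the disjointness $B = B^+ \sqcup B^-$ and that $B^-$ is untouched before processing $B$ — is a direct consequence of the tree-decomposition axioms and requires no computation.
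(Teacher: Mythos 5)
Your proof is correct, and it is essentially the same induction on the processing order that appears in Chlamt\'a\v{c} et al.\ (the paper cites the lemma from there without reproving it). The key facts you identify are exactly the ones needed: the tree-decomposition connectivity property guarantees that the only previously-fixed vertices of $B$ lie in $B^+ = B \cap B'$, the consistency constraint on the $\{\mu_L\}$ gives $\mu_B|_{B^+} = \mu_{B'}|_{B^+}$, and the chain rule collapses the product back to $\Pr_{\mu_B}[f^* = g]$. One cosmetic point worth adding for completeness: when $\Pr_{\mu_B}[f^*|_{B^+} = g^+] = 0$ the conditional is undefined, but then by the same consistency/induction argument $\Pr[f|_{B^+} = g^+] = 0$ as well, so both sides of the claimed identity vanish and the case is vacuous.
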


A direct consequence of this lemma is the fact that any edge $\{s,t\}$ of $G$ is cut by the algorithm with probability $\lpcut(s,t)$.
In particular, the expected capacity of the rounded cut is therefore
\[
\sum_{\{s,t\} \in E_G}\cp_{\{s,t\}} \cdot \lpcut(s,t),
\]
which is the value ``predicted'' by the distribution $\mu_L$.
The same property does not hold for the (demand) edges of $D$ since they may not be contained in any bag of $\tset$.

Denote by $\algcut(s,t)$ the probability that the algorithm separates $s$ and $t$, that is, $\algcut(s,t) = \Pr_{f \sim \aset}[f(s) \neq f(t)]$.
We would like to lower bound $\algcut(s,t) \geq c \; \lpcut(s,t)$ for all demand edges $\{s,t\}$ and some value $c > 0$.
This would imply that the expected demand of the rounded cut is at least $c\sum_{\{s,t\} \in E_D}\dm_{\{s,t\}} \lpcut(s,t)$, and having a good expected demand and capacity is sufficient for computing a good solution by the following observation.
\begin{observation}[\cite{ckr10}, Remark 4.3]
\label{obs:derandomisation}
	The cut sparsity $\alpha$ predicted by distributions $\set{\mu_L}_L$ is
	\[
	\alpha := \dfrac{\sum_{\{s,t\} \in E_G}\cp_{\{s,t\}} \cdot \lpcut(s,t)   }{\sum_{\{s,t\} \in E_D}\dm_{\{s,t\}} \cdot  \lpcut(s,t)  } \enspace .
	\]
	Then if $\algcut(s,t) \geq c \cdot \lpcut(s,t)$ for all $\{s,t\} \in E_D$ and $\algcut(s,t) = \lpcut(s,t)$ for $\{s,t\} \in E_G$, we have
	\[
		\mathbb{E}_{f \sim \aset}\left[\sum_{\{s,t\} \in E_G}\cp_{\{s,t\}}  |f(s) - f(t)| - \frac{\alpha}{c}\sum_{\{s,t\} \in E_D}\dm_{\{s,t\}} |f(s) - f(t)|\right] \leq 0 \enspace.
	\]
    A solution is $c$-approximate if the value in the expectation above is non-positive, and such a solution can either be obtained by repeated rounding or by derandomization
    using the method of conditional expectations, without increasing the asymptotic run time.
\end{observation}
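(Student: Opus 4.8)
The statement has two ingredients: the displayed inequality on the expected objective, and the claim that it yields a low-sparsity cut efficiently. For the inequality the plan is pure linearity of expectation. Since $|f(s)-f(t)|\in\{0,1\}$ and equals $1$ exactly when the cut $f^{-1}(1)$ separates $s$ and $t$, we have $\mathbb{E}_{f\sim\aset}[|f(s)-f(t)|]=\algcut(s,t)$ for every pair $\{s,t\}$, so
\[
\mathbb{E}_{f\sim\aset}\left[\sum_{\{s,t\}\in E_G}\cp_{\{s,t\}}|f(s)-f(t)|\right]
=\sum_{\{s,t\}\in E_G}\cp_{\{s,t\}}\algcut(s,t)
=\sum_{\{s,t\}\in E_G}\cp_{\{s,t\}}\lpcut(s,t),
\]
where the last equality is the hypothesis $\algcut(s,t)=\lpcut(s,t)$ on $E_G$ (it also follows from \Cref{lem:bagRealisation}, since each capacity edge lies in a common bag). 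In the same way the hypothesis on $E_D$ gives $\mathbb{E}_{f\sim\aset}[\sum_{\{s,t\}\in E_D}\dm_{\{s,t\}}|f(s)-f(t)|]=\sum_{\{s,t\}\in E_D}\dm_{\{s,t\}}\algcut(s,t)\ge c\sum_{\{s,t\}\in E_D}\dm_{\{s,t\}}\lpcut(s,t)$.

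Combining the two estimates is the single place where the order of operations matters, because the demand term carries the \emph{negative} coefficient $-\alpha/c$: the bound $\algcut\ge c\,\lpcut$ on $E_D$ becomes the upper bound $-\frac{\alpha}{c}\mathbb{E}_{f\sim\aset}[\sum_{\{s,t\}\in E_D}\dm_{\{s,t\}}|f(s)-f(t)|]\le-\alpha\sum_{\{s,t\}\in E_D}\dm_{\{s,t\}}\lpcut(s,t)$. Adding this to the capacity identity above and using the definition of $\alpha$ — which says precisely that $\alpha\sum_{\{s,t\}\in E_D}\dm_{\{s,t\}}\lpcut(s,t)=\sum_{\{s,t\}\in E_G}\cp_{\{s,t\}}\lpcut(s,t)$ — collapses the right-hand side to $0$, the claimed inequality. (We use here that the denominator of $\alpha$ is nonzero; this holds whenever $E_D\neq\emptyset$, since for any $\{u,v\}\in E_D$ the characteristic vector of the singleton cut $S=\{u\}$ is an LP-feasible solution separating $\{u,v\}$, so an optimal family $\{\mu_L\}$ satisfies $\sum_{\{s,t\}\in E_D}\dm_{\{s,t\}}\lpcut(s,t)>0$.)

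For the second ingredient, abbreviate $A(f)=\sum_{e\in E_G(S,V-S)}\cp_e$ and $B(f)=\sum_{e\in E_D(S,V-S)}\dm_e$ for the cut $S=f^{-1}(1)$, so the inequality reads $\mathbb{E}_{f\sim\aset}[A(f)-\frac{\alpha}{c}B(f)]\le0$. On any outcome with $B(f)=0$ we have $A(f)-\frac{\alpha}{c}B(f)=A(f)\ge0$, so the expectation restricted to outcomes with $B(f)>0$ is still $\le0$; since $\mathbb{E}_{f\sim\aset}[B(f)]\ge c\sum_{\{s,t\}\in E_D}\dm_{\{s,t\}}\lpcut(s,t)>0$ (recall $c>0$) those outcomes occur with positive probability, so some $f$ has $B(f)>0$ and $A(f)\le\frac{\alpha}{c}B(f)$, i.e.\ $\Phi_{G,D}(S)\le\alpha/c$. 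Algorithmically one may either re-run \autoref{alg:Chlamtac} and return the sparsest cut ever produced, or derandomize by the method of conditional expectations along the random choices of \autoref{alg:Chlamtac}: once a prefix of the bags has been assigned, the conditional expectations of $A(f)$ and of $B(f)$ can be computed by a bottom-up dynamic program over $\tset$ in time $2^{O(w(\tset))}\poly(n)$ — capacity terms are local to bags, and for a demand pair one propagates the joint marginal of its two endpoints up $\tset$ — and at each step one picks a choice that keeps the conditional expectation of $A(f)-\frac{\alpha}{c}B(f)$ nonpositive while keeping that of $B(f)$ positive.

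I expect the genuinely delicate point to be exactly this last one: running the derandomization so that it keeps the objective nonpositive \emph{and} the separated demand strictly positive at the same time, and evaluating the conditional expectations for demand pairs — whose endpoints need not co-occur in any bag — efficiently. This is precisely what is carried out by \chlamtacetal~\cite{ckr10}, and we invoke it. The probabilistic inequality itself, by contrast, is two lines of linearity of expectation, the only hazard being the sign of the coefficient $-\alpha/c$ when the bound on $E_D$ is passed through it.
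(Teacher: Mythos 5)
Your argument is correct and is the standard one behind the cited remark of \chlamtacetal: linearity of expectation combined with the two hypotheses and the definition of $\alpha$ yields the displayed inequality, and your care in insisting on an outcome with $B(f)>0$ when extracting a cut (and when derandomizing) is the right reading of the slightly loose phrase ``$c$-approximate''. The paper gives no proof of its own here and defers to \cite{ckr10} for the conditional-expectations machinery, exactly as you do.
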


This observation implies that the bottleneck to obtaining a good approximation factor is the extent to which our rounding algorithm can approximate the marginal of $\mu_L$ on the individual edges of $D$.
Our main result relates this marginal to the combinatorial diameter of $\tset$. It can now be stated as follows:
\begin{theorem}
\label{thm:algo:main}
  Let $(G,D,\cp,\dm)$ be an instance of\ \,\scp, and $(\tset, \set{B_i}_i)$ a tree decomposition of $G$ with width $w(\tset)$ and combinatorial diameter $\Delta(\tset)$.
  Then {\sc SC-ROUND} satisfies $\algcut(s,t) \geq \Omega\paren{\frac{1}{\Delta(\tset)^2}} \cdot \lpcut(s,t)$ for every $\{s,t\}\in E_D$.  Therefore, we have a factor-$O(\Delta(\tset)^2)$ approximation for sparsest cut with run time $2^{O(w(\tset))}\cdot \poly(n)$.
\end{theorem}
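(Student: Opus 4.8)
The plan is to lower-bound $\algcut(s,t)$ separately for each demand edge $\{s,t\}\in E_D$ and then invoke \Cref{obs:derandomisation}. If some bag of $\tset$ contains both $s$ and $t$, then \Cref{lem:bagRealisation} already gives $\algcut(s,t)=\lpcut(s,t)$, so suppose not, and let $\pi=(B_{j_0},\dots,B_{j_m})$ be the path in $\tset$ joining the subtree of bags containing $s$ to that of $t$ (so $s\in B_{j_0}$, $t\in B_{j_m}$, and no interior bag contains $s$ or $t$). Fix a simplification $Q_0=B_{j_0},Q_1,\dots,Q_\ell=B_{j_m}$ of $\pi$ with $\ell\le\Delta(\tset)$, obtained by bypassing $(s,t)$-redundant nodes, and write $S_i=Q_i\cap Q_{i+1}$. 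The first step is to establish that \textsc{SC-Round}, viewed on this path, is a Markov chain: under $\aset$ the tuple $(f|_{Q_0},\dots,f|_{Q_\ell})$ has the law of the process that draws $f|_{Q_0}\sim\mu_{Q_0}$ and then each $f|_{Q_{i+1}}$ from $\mu_{Q_{i+1}}$ conditioned on $f|_{S_i}$. For $\tset$-adjacent bags this is immediate from \autoref{alg:Chlamtac} together with \Cref{lem:bagRealisation}, and it is unaffected by where the starting bag $B_0$ lies, since the output of the algorithm is a Markov random field on the tree $\tset$, which restricts to a Markov chain on any path. Passing to the simplified path requires checking that bypassing an $(s,t)$-redundant node $v$ with path-neighbours $u,w$ does not change this marginal: here $B_u\cap B_w=B_v\cap B_w$, these coordinates lie in $B_u\cap B_v$ and so are never resampled when $B_v$ is processed, hence $f|_{B_w}$ depends on the past only through $f|_{B_u\cap B_w}$ --- exactly the transition of the chain with $v$ deleted (the two orientations in which BFS can visit $v$ are handled alike, using consistency of the $\mu_L$).

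Next, by \Cref{lem:bagRealisation} the marginal of the algorithm on $Q_\ell$ is $\mu_{Q_\ell}$ and $f(t)$ is a coordinate of $f|_{Q_\ell}$; conditioning on $f|_{Q_\ell}$ therefore yields
\[
\algcut(s,t)=\mathbb{E}_{q\sim\mu_{Q_\ell}}\!\left[\hat a(q)\,(1-q_t)+(1-\hat a(q))\,q_t\right],\qquad \hat a(q):=\Pr_{\aset}[f(s)=1\mid f|_{Q_\ell}=q],
\]
where $q_t$ is the $t$-coordinate of $q$, whereas $\lpcut(s,t)=\Pr_{\mu_{Q_\ell\cup\{s,t\}}}[f(s)\neq f(t)]$ is the same expression with $\hat a$ replaced by the pseudo-distribution's posterior $a(q):=\Pr_{\mu_{Q_\ell\cup\{s\}}}[f(s)=1\mid f|_{Q_\ell}=q]$ (the two base laws on $Q_\ell$ agree by consistency). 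So everything reduces to comparing how the algorithm and the LP ``remember'' $f(s)$ after propagating across the $\le\ell$ interfaces $S_0,\dots,S_{\ell-1}$: the algorithm retains, at each step, only the coordinates on the interface it crossed, while the LP may be more optimistic; unrolled all the way back to $Q_0$ the two posteriors coincide, both equal to the $s$-coordinate of $f|_{Q_0}$.

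The heart of the proof --- and the step I expect to be the main obstacle --- is the analytic lemma that this propagation costs at most a factor $O(\ell^2)$, i.e.\ $\algcut(s,t)\ge\Omega(1/\ell^2)\cdot\lpcut(s,t)$. I would split the $\ell^2$ into two factors of $\ell$. First, a localization: since $\lpcut(\cdot,\cdot)$ is a pseudometric on the vertex pairs occurring together in some bag (triangle inequality for the disagreement function $\mathbf{1}[x\neq y]$, together with consistency of the $\mu_L$), $\lpcut(s,t)$ is at most a sum of $O(\ell)$ terms $\lpcut(\cdot,\cdot)$ stepping from $Q_0$ to $Q_\ell$ through vertices of the successive interfaces, so some interface $S_{i^\ast}$ accounts for an $\Omega(1/\ell)$ fraction of $\lpcut(s,t)$. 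Second, a propagation estimate: \textsc{SC-Round} produces the ``right'' cut pattern on $S_{i^\ast}$ with the LP-predicted probability (again by \Cref{lem:bagRealisation}, as its marginal on $Q_{i^\ast}$ is $\mu_{Q_{i^\ast}}$), and, conditioned on such a pattern, it still places $s$ and $t$ on opposite sides with probability $\Omega(1/\ell)$, the second $1/\ell$ paying for the at most $\ell$ further resampling steps needed to carry the pattern out to the coordinates $s$ and $t$. Multiplying the two estimates gives $\algcut(s,t)=\Omega(1/\ell^2)\cdot\lpcut(s,t)\ge\Omega(1/\Delta(\tset)^2)\cdot\lpcut(s,t)$.

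Finally, since $\algcut(s,t)=\lpcut(s,t)$ for every cost edge by \Cref{lem:bagRealisation}, \Cref{obs:derandomisation} with $c=\Omega(1/\Delta(\tset)^2)$ converts the pseudo-distribution into a cut of sparsity at most $O(\Delta(\tset)^2)\cdot\alpha\le O(\Delta(\tset)^2)\cdot\Phi_{G,D}$. As the pseudo-distribution is computed, and \textsc{SC-Round} and the derandomization run, in time $2^{O(w(\tset))}\cdot\poly(n)$, this establishes the theorem.
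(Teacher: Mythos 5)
Your overall skeleton matches the paper: reduce to a simplification of length $\ell\le\Delta(\tset)$ by \Cref{lem:traverselInv} and \Cref{lem:distpreservereduction}, view \textsc{SC-Round} along that path as a Markov chain, and close with \Cref{obs:derandomisation}. The genuine gap is in your proposed proof of the core inequality $\algcut(s,t)\ge\Omega(1/\ell^2)\lpcut(s,t)$, which is not how the paper argues and, as written, does not go through.

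Your ``propagation estimate'' is where it breaks. You claim that once the algorithm has produced the right cut pattern on some single interface $S_{i^\ast}$ (which it does with the LP-predicted probability), it carries that pattern out to the coordinates $s$ and $t$ with probability at least $\Omega(1/\ell)$, each of the $\le\ell$ remaining resampling steps ``paying'' a $1/\ell$. There is no mechanism for that: a Markov chain over $\ell$ steps can forget the state of layer $i^\ast$ exponentially fast, not linearly, so conditioning on the cut at $S_{i^\ast}$ need not transfer any $\Omega(1/\ell)$ amount of disagreement to $(s,t)$. (If each step preserved information only up to a constant factor you would get $c^\ell$, not $1/\ell$; if you instead mean an additive loss per step, you have not said why the per-step loss is $O(1/\ell)\cdot\lpcut$.) Your localization step also silently assumes a chained triangle inequality for $\lpcut$ across bags, which needs the local consistency of the $\{\mu_L\}$ to justify, but that part is at least plausible.

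What the paper does instead, via \Cref{lem:algo:markov}, is to avoid any single-interface localization: it builds the layered Markov flow graph $H$ on the conditioning sets, certifies $\Pr_{\mu}[f(s)=0,f(t)=1]$ as an $s_0$-$t_1$ flow, and bounds the min $s_0$-$t_1$ cut by $O(\ell^2)\Pr_{\aset}[f(s)=0\wedge f(t)=1]$ using the CKR potential $\phi(i)=\operatorname{Var}[A(X_i)]$ with the telescoping bound $\phi(0)-\phi(\ell)\le 2\Pr[X_0=s_0\wedge X_\ell=t_1]$ and the threshold-cut estimate \Cref{lem:MarkovCut}. The potential/variance argument is exactly what controls the cumulative information loss over $\ell$ Markov steps; your sketch replaces it with an assertion. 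If you want to keep a two-factor-of-$\ell$ decomposition, you would still need something like the variance telescoping to justify the second factor, so I would recommend following the paper's flow/cut route rather than trying to patch the propagation claim locally.
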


The rest of this section is devoted to proving this theorem.
\subsection{Step 1: Reduction to Short Path}
In this section, we show that when the combinatorial diameter of the tree decomposition is $\cdiam = \Delta(\tset)$, the analysis can be reduced to the case of a path decomposition of length $\cdiam$.
We employ the following lemma to simplify our analysis of the behavior of the algorithm.
\begin{lemma}[\cite{ckr10}, Lemma 3.4]
  \label{lem:traverselInv}
	The distribution over the assignments $f$ is invariant under any connected traversal of $\tset$, i.e. the order in which bags are processed does not matter, as long as they have a previously processed neighbor.
	The choice of the first bag $B_0$ also does not impact the distribution.
\end{lemma}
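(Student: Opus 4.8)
The plan is to derive a closed form for the probability that \autoref{alg:Chlamtac} outputs a given $V(G)$-assignment $f$, and then to observe that this form mentions neither the order in which bags are processed nor the first bag $B_0$. First I would record the combinatorial structure of a connected traversal. Listing the bags in processing order as $B_{(1)} = B_0, B_{(2)}, \dots, B_{(m)}$ with $m = |V(\tset)|$, an easy induction (using that each $B_{(j+1)}$ has a previously processed neighbor, and that $\tset$ is a tree) shows that the processed set $S_j = \{B_{(1)},\dots,B_{(j)}\}$ always induces a subtree of $\tset$, and that $B_{(j+1)}$ has \emph{exactly one} neighbor in $S_j$, which I call its \emph{parent} $B_{\pi(j+1)}$; moreover the $m-1$ parent pairs $\{B_{(j)}, B_{\pi(j)}\}$, $j = 2,\dots,m$, are pairwise distinct edges of $\tset$, hence are precisely $E(\tset)$.

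Next I would pin down what the algorithm does at each step: when $B_{(j)}$ is processed, the set of its vertices that are already assigned a value is exactly $B_{(j)} \cap B_{\pi(j)}$, so $B^+ = B_{(j)} \cap B_{\pi(j)}$ and $B^- = B_{(j)}\setminus B_{\pi(j)}$ in the notation of \autoref{alg:Chlamtac}. The inclusion ``$\supseteq$'' is clear because $B_{\pi(j)}\in S_{j-1}$ is fully processed; for ``$\subseteq$'', if $v \in B_{(j)}$ is already assigned then $v\in B$ for some $B \in S_{j-1}$, and the subtree of bags containing $v$ (a tree-decomposition axiom) contains the $\tset$-path from $B$ to $B_{(j)}$, which must pass through $B_{\pi(j)}$ since $B_{(j)}$ is a leaf of the subtree $S_{j-1}\cup\{B_{(j)}\}$ whose only neighbor there is $B_{\pi(j)}$; hence $v \in B_{\pi(j)}$. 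Writing $\mu_X(f') := \Pr_{g\sim\mu_X}[g = f']$, the probability that the $j$-th sampling step produces the restriction $f|_{B^-}$ is therefore $\mu_{B_{(j)}}(f|_{B_{(j)}}) / (\mu_{B_{(j)}}|_{B^+})(f|_{B^+})$, and by the consistency property $\mu_{B_{(j)}}|_{B^+} = \mu_{B_{\pi(j)}}|_{B^+}$, so the denominator is a quantity $\nu_{e}(f)$ depending only on the unordered edge $e = \{B_{(j)}, B_{\pi(j)}\}$ of $\tset$ and on $f$, not on which endpoint was processed first.

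Finally I would multiply the step probabilities: for every $f$ in the support of $\aset$, $\Pr_{g\sim\aset}[g = f] = \prod_{j=1}^{m}\mu_{B_{(j)}}(f|_{B_{(j)}}) \big/ \prod_{j=2}^{m}\nu_{e_j}(f)$ with $e_j$ the $j$-th parent edge, which by the first step equals $\big(\prod_{t\in V(\tset)}\mu_{B_t}(f|_{B_t})\big)\big/\big(\prod_{e\in E(\tset)}\nu_e(f)\big)$; moreover the support of $\aset$ is exactly $\{f : \mu_{B_t}(f|_{B_t}) > 0 \text{ for all } t\in V(\tset)\}$, since such an $f$ can be generated step by step (each denominator is at least the corresponding numerator) and any generated $f$ trivially satisfies this. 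Neither the support nor the displayed formula refers to the traversal order or to $B_0$, so $\aset$ is the same for all connected traversals and all choices of $B_0$, proving the lemma. I expect the only delicate point to be the second step: it is the only place the tree-decomposition axioms enter, and it is what guarantees that the algorithm never re-assigns a vertex and that $B^+$ is genuinely $B\cap B'$, and one must also check (via the support characterization) that all the conditional probabilities appearing are well defined. A local exchange argument swapping two consecutive processing steps is an alternative to this whole scheme, but tracking how the parent function changes under such a swap is messier than going through the closed form.
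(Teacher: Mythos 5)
The paper offers no proof of this lemma itself; it is imported verbatim from \cite{ckr10}, so the only meaningful comparison is with the argument there. Your proof is correct and is essentially that argument: the key facts — that each newly processed bag meets the already-assigned vertices exactly in its intersection with its unique processed neighbor (the one place the connectivity axiom of tree decompositions is used), and that the step probabilities therefore telescope to the order-free factorization $\Pr_{g\sim\aset}[g=f]=\prod_{t\in V(\tset)}\mu_{B_t}(f|_{B_t})\,\big/\prod_{\{i,j\}\in E(\tset)}\mu_{B_i\cap B_j}(f|_{B_i\cap B_j})$ — are precisely the content of the cited Lemma~3.4, and your handling of well-definedness of the conditionals via the support characterization is sound.
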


Let $\set{s,t} \in E_D$ be a demand edge. %
If $s$ and $t$ are contained in a common bag, then $\algcut(s,t) = \lpcut(s,t)$ by \Cref{lem:bagRealisation} and we are done; therefore, we
assume that there is no bag containing both $s$ and $t$. %
We want to estimate the probability that $s$ and $t$ separated by the algorithm, that is, the probability that $f(s) \neq f(t)$.

The lemma above allows us to reduce to the case in which the algorithm first rounds a bag $B_1$ containing $s$, then rounds bags $B_2, \dots, B_{\ell-1}$ along the path to a bag $B_{\ell}$ containing $t$, and finally $B_{\ell}$.
At this point the algorithm has already assigned $f(s)$ and $f(t)$, so the remaining bags of $\tset$ can be rounded in any connected order without impacting the separation probability.
Hence, it is sufficient to characterize the behavior of the rounding algorithm along paths in $\tset$.

Let $P$ be the shortest path connecting a bag containing $s$ to a bag containing $t$; %
denote such path by $P = v_1 v_2 \ldots v_{\ell}$ such that $s \in B_{v_1}$ and $t \in B_{v_{\ell}}$. %
By \Cref{lem:traverselInv} we can assume that the algorithm first processes  $B_{v_1}$, and then all other bags $B_{v_2}, \dots, B_{v_\ell}$, in
this order.

Observe that, except for $v_1$ and $v_{\ell}$, no other bag of $P$ contains $s$ or $t$. 
We repeatedly apply the reduction rule from \Cref{def:simplification} until the resulting path has length at most $\cdiam$.
The following lemma asserts that the distribution of the algorithm is preserved under this reduction rule.

We slightly abuse the notation and denote by $\aset$ the distribution of our algorithm on path $P$ starting from $v_1$.

\begin{lemma}
\label{lem:distpreservereduction}
  Let $u,v,w$ be three consecutive internal bags on $P$ with $B_v \cap B_w \subseteq B_u$. Let $P'$ be a simplification of $P$ bypassing $v$ and let $\aset'$ be the distribution obtained by running the algorithm on path $P'$, starting on $v_1$. 
  Then $\aset'$ is exactly the same as $\aset$ restricted to $B(P')$.
\end{lemma}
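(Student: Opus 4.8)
The plan is to show that the rounding process on $P$, restricted to the bags of $P'$, is distributed identically to the rounding process run directly on $P'$. Both processes start at $v_1$ and walk along the path in the same order, and they agree bag-by-bag until they reach $u$; at that point the process on $P$ must additionally sample $f|_{B_v}$ before sampling $f|_{B_w}$, whereas the process on $P'$ jumps straight from $u$ to $w$. So the entire content of the lemma is that inserting this one extra sampling step for $v$ does not change the joint distribution of $f$ on $B(P')$. I would isolate this as the following claim: conditioned on $f|_{B_u}$, the assignment $f|_{B_w}$ produced on $P$ (after passing through $v$) has the same conditional distribution as $f|_{B_w}$ produced on $P'$ (directly from $u$). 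Once this claim holds, a straightforward induction along the remainder of the path — using that all later bags see exactly the same ``already-fixed'' portion in both processes — finishes the proof, because from $w$ onward the two processes are verbatim identical.

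To prove the claim, I would unfold the definition of the sampling step in \autoref{alg:Chlamtac} together with \Cref{lem:bagRealisation}. On $P'$, the bag $w$ has processed parent $u$, so $B^+ = B_w \cap B_u$ and $f|_{B_w}$ is drawn from $\mu_{B_w}$ conditioned on $f|_{B_w \cap B_u} = f|_{B_w \cap B_u}$; hence, by \Cref{lem:bagRealisation} applied at $u$, the law of $f|_{B_w}$ on $P'$ given $f|_{B_u}$ is precisely $\mu_{B_u \cup B_w}$ conditioned on the $B_u$-coordinates (using consistency of the family $\{\mu_L\}$ to talk about $\mu_{B_u \cup B_w}$). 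On $P$, we first draw $f|_{B_v}$ from $\mu_{B_v}$ conditioned on $f|_{B_v \cap B_u}$, and then draw $f|_{B_w}$ from $\mu_{B_w}$ conditioned on $f|_{B_w \cap B_v}$. The key structural fact to invoke here is the redundancy hypothesis $B_v \cap B_w \subseteq B_u$: this means that the coordinates of $f|_{B_v}$ on which the $w$-step conditions are already determined by $f|_{B_u}$, so the intermediate randomness used to fill in $B_v \setminus B_u$ is irrelevant to $B_w$. Formally, I would marginalize out $f|_{B_v \setminus B_u}$: since $B_w \cap B_v \subseteq B_u$, the conditional law of $f|_{B_w}$ given $(f|_{B_u}, f|_{B_v})$ depends only on $f|_{B_{v}\cap B_w} \subseteq f|_{B_u}$, and averaging over $f|_{B_v}$ (with its correct conditional law given $f|_{B_u}$) leaves the law of $f|_{B_w}$ given $f|_{B_u}$ unchanged. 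Checking that this averaged law equals $\mu_{B_u \cup B_w}$ conditioned on $B_u$ again uses \Cref{lem:bagRealisation} (so that $f|_{B_u}$ really is $\mu_{B_u}$-distributed) together with the pairwise consistency of the distributions $\mu_L$, which guarantees the marginals $\mu_{B_v}$, $\mu_{B_w}$, $\mu_{B_u \cup B_w}$ all fit together coherently.

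I would then assemble these pieces: first note that up to and including the bag $u$, the two runs are literally the same sampling procedure, so $f|_{B(P' \cap \{v_1,\dots,u\})}$ has the same law in both; then apply the claim to see that $f|_{B_w}$ given the past agrees; then observe that every subsequent bag $x$ on $P'$ has, in both runs, the same processed parent and therefore conditions on the identical fixed coordinates, so an induction step-by-step along $P'$ propagates the equality of distributions to all of $B(P')$. I expect the main obstacle to be the bookkeeping in the marginalization argument — precisely arguing that $f|_{B_v\setminus B_u}$ is a ``harmless'' intermediate variable requires carefully writing out the conditional independence statement $f|_{B_w} \perp f|_{B_v \setminus B_u} \mid f|_{B_u}$ that follows from $B_v \cap B_w \subseteq B_u$, and then verifying the resulting marginal law is the ``right'' one via the consistency axiom on $\{\mu_L\}$. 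Everything else is a direct unrolling of the algorithm's definition and \Cref{lem:bagRealisation}.
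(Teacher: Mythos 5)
Your approach is genuinely different from the paper's. The paper constructs an auxiliary tree decomposition $\hat{\tset}$ in which $v$ is split into $v'$ (with bag $B_u \cap B_v$, on the path) and a leaf $v''$ (with bag $B_v$), then invokes the traversal-invariance lemma (\Cref{lem:traverselInv}) to argue that rounding on $\hat{\tset}$ reproduces $\aset$, that deleting the leaf $v''$ recovers $\aset|_{B(P')}$, and that $v'$ is a vacuous step since $B_{v'} \subseteq B_u$. You instead directly compare the conditional laws of $f|_{B_w}$ given $f|_{B_u}$ produced by the two processes, which is a perfectly reasonable and arguably more elementary route. Both proofs ultimately exploit the same combinatorial hypotheses; your version makes the Markov-chain structure more explicit, while the paper's version leans on the already-established machinery of \Cref{lem:traverselInv} and avoids any explicit marginalization.

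However, there is a gap at the crux of your argument. You correctly reduce matters to comparing ``$\mu_{B_w}$ conditioned on $f|_{B_v \cap B_w}$'' (the $P$-side) with ``$\mu_{B_w}$ conditioned on $f|_{B_u \cap B_w}$'' (the $P'$-side), and you use $B_v \cap B_w \subseteq B_u$ to observe that the $P$-side conditioning values are already determined by $f|_{B_u}$. But this only shows $B_v \cap B_w \subseteq B_u \cap B_w$; if the containment were strict, the $P$-side would condition on strictly less information and the two laws need not agree. To close the argument you must also invoke the running-intersection property of $P$ as a (sub)path of a tree decomposition: since $u,v,w$ are consecutive, every vertex in $B_u \cap B_w$ lies in $B_v$, i.e.\ $B_u \cap B_w \subseteq B_v$. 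Together with the hypothesis this yields $B_v \cap B_w = B_u \cap B_w$, and only then are the two $\mu_{B_w}$-conditionals identical (and is $B_w \setminus B_v = B_w \setminus B_u$, so the sampled coordinates also agree). Relatedly, your appeal to a distribution $\mu_{B_u \cup B_w}$ is not well-founded: the paper defines $\mu_L$ only for sets of the form $L = B_i \cup \{s,t\}$, and the stated consistency property guarantees agreement of marginals, not the existence of a coherent joint law on $B_u \cup B_w$. Fortunately that object is unnecessary --- once $B_v \cap B_w = B_u \cap B_w$ is in hand, you can compare the two conditionals directly, with no detour through a hypothetical joint distribution.
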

\begin{proof}
  We can assume, without loss of generality, that $u,v,w$ appear on $P$ in the order of rounding; for otherwise, we apply \Cref{lem:traverselInv} twice: first, to reverse $P$, and preserve the distribution~$\aset$; then, to undo the reversing of $P'$ caused by the previous application.

  We modify the path decomposition $P$ into a (tree) decomposition $\hat{\tset}$ as follows: remove bag $v$ and add two new bags $v',v''$ where bag $v'$ is connected to $u$ and $w$ with $B_{v'} = B_u \cap B_v$ and $v''$  is connected to $v'$ with $B_{v''} = B_v$.
  This remains a tree decomposition for the vertices in $B(P)$ since vertices in $B_v \setminus B_u$ only occur in the bag $B_{v''}$ (due to our assumption that $B_v \cap B_w \subseteq B_u$).

  It is easy to check that run the algorithm {\sc SC-ROUND} on $\hat{\tset}$ produces exactly the same distribution as $\aset$.
  Since $B(P') = B(P) \setminus (B_{v''} \setminus B_{v'} )$, we have that $\aset|_{B(P')}$ is the distribution of {\sc SC-ROUND} on the path $\hat{P} = v_1 \ldots u v' w \ldots v_{\ell}$, obtained by removing $v''$ from $\hat{\tset}$.
  Now since $B_{v'} \subseteq B_u$, the rounding algorithm in fact does not do anything at bag $v'$, so it can be removed without affecting the distribution.
  We obtain path $P'$ as a result, and this implies that $\aset|_{B(P')}$ is the same distribution as $\aset'$.
\end{proof}

This result allows us conduct the rounding analysis on simplifications of paths.
It remains to show that this is beneficial, that is, that the rounding error can be bounded by the length of the path on which we round.
As in the work of \chlamtacetal~\cite{ckr10}, we use Markov flow graphs to analyze that error.

\subsection{Step 2: Markov Flow Graphs}
\label{sec:markov}
Let $P = v_1, \dots, v_\ell$ be a path with length $\ell$ and $s \in B_{v_1}$, $t\in B_{v_\ell}$. %
We run \autoref{alg:Chlamtac} from $v_1$ to~$v_\ell$ to compute some assignment $f$.
Let $\aset$ be the probability distribution of the resulting assignment $f$. %
Recall that $\algcut(s,t)$ denotes the probability that the algorithm assigns $f(s) \neq f(t)$, and $\lpcut(s,t)$ is the probability that $s$ and $t$ are separated according to the distributions $\set{\mu_L}_L$, i.e.~$\Pr_{f \sim \mu_{B \cup \{s,t\}}}[f(s) \neq f(t)]$.
In the second step, we analyze the probability of $\algcut(s,t)$ in terms of $\lpcut(s,t)$.
This step is encapsulated in the following lemma.

\begin{lemma}
\label{lem:algo:markov}
  There exists a directed layered graph $H$ containing nodes $s_0,s_1,t_0,t_1 \in V(H)$ and a weight function $w_H$ on the edges, satisfying the following properties:
  \begin{enumerate}
    \item For $i = 0,1$, we have that $\Pr_{f \sim \aset}[f(s) = i\; \&\; f(t) = 1-i]$ is at least an $\Omega(1/\ell^2)$-fraction of the minimum $(s_i, t_{1-i})$-cut of $H$.\label{lem:algo:markov:i}
    \item For $i = 0,1$, the value of a maximum $(s_i, t_{1-i})$-flow in $H$ is at least $\Pr_{f \sim \mu}[f(s) =i\; \&\; f(t) = 1-i]$.\label{lem:algo:markov:ii}
  \end{enumerate}
\end{lemma}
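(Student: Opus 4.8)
The plan is to model the rounding process along the path $P = v_1, \dots, v_\ell$ as a Markov chain on $\{0,1\}$-assignments of the "interface" sets $B_{v_j} \cap B_{v_{j+1}}$, and to build the layered graph $H$ so that layer $j$ has one node per $\{0,1\}$-assignment of this interface set (together with a consistent value for $s$ and $t$), and edges between consecutive layers carry the transition probabilities produced by \autoref{alg:Chlamtac}. Concretely, I would first set up the structure: by \Cref{lem:traverselInv} and \Cref{lem:distpreservereduction} we may assume we round $B_{v_1}$ first and then $B_{v_2}, \dots, B_{v_\ell}$ in order, and that no internal bag contains $s$ or $t$. The state after processing bag $v_j$ that is relevant for the future is exactly $f$ restricted to $B_{v_j} \cap B_{v_{j+1}}$ — this is the separator between the already-processed part and the remainder, so by the tree-decomposition property the conditional distribution of everything downstream depends only on this restriction. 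Thus the process is genuinely Markovian, with \Cref{lem:bagRealisation} guaranteeing that the marginal at each bag is $\mu_{B_{v_j}}$.

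The key steps in order: (1) Define $H$ as a layered graph with layers $0, 1, \dots, \ell$; layer $j$ (for $1 \le j \le \ell-1$) has a vertex for each pair $(b, a)$ where $a$ is a value in $\{0,1\}$ reserved for "which side $s$ was on" and $b \in \mathcal{F}[B_{v_j}\cap B_{v_{j+1}}]$; prepend source nodes $s_0, s_1$ (corresponding to $f(s)=0$ resp.\ $1$) in layer $0$ and append sink nodes $t_0, t_1$ in layer $\ell$. (2) Put an edge from a layer-$j$ node to a layer-$(j{+}1)$ node with weight equal to the probability that, conditioned on the layer-$j$ interface assignment, \autoref{alg:Chlamtac} produces the layer-$(j{+}1)$ interface assignment (while keeping the $s$-tag fixed); the initial edges out of $s_i$ have weight $\mu_{B_{v_1}}[f(s)=i \text{ and the } v_1\text{-}v_2 \text{ interface}=\,\cdot\,]$, and the final edges into $t_{1-i}$ collect the mass of assignments with $f(t) = 1-i$. (3) For part~\ref{lem:algo:markov:ii}: observe that the total weight of all $(s_i,t_{1-i})$-paths equals exactly $\Pr_{f\sim\aset}[f(s)=i \ \&\ f(t)=1-i]$ by construction (telescoping the Markov transitions), and since these path weights constitute a feasible flow of that value, the max-flow is at least this quantity; but also $\Pr_{f\sim\aset}[f(s)=i\,\&\,f(t)=1{-}i] = \Pr_{f\sim\mu}[\cdot]$ on the relevant marginals because each bag realizes $\mu_{B_{v_j}}$ exactly (\Cref{lem:bagRealisation}) — actually here we want the flow value to dominate the $\mu$-prediction, which follows since the flow value literally equals $\algcut$'s contribution and that equals the $\mu$-probability on the pair $\{s,t\}$ up to the split over $i$. (4) For part~\ref{lem:algo:markov:i}, the harder direction: relate $\Pr_{f\sim\aset}[f(s)=i\,\&\,f(t)=1-i]$ to the \emph{min-cut} of $H$. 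Take a minimum $(s_i,t_{1-i})$-cut; it picks, in each layer, a "frontier" of interface states, and crossing it corresponds to a sign-change event along the random walk. The point is that with only $\ell$ layers, the walk can cross any fixed min-cut frontier only so efficiently: I would argue that the probability of the walk ending on the wrong side is at least a $1/\ell^2$ fraction of the min-cut value, by a potential/martingale argument — e.g., assign each layer's cut edges a "level" in $\{0,1,\dots,\ell\}$, track the level of the current state as a supermartingale-like quantity, and use a second-moment or gambler's-ruin estimate to show the walk reaches the far side with probability $\gtrsim (\text{min-cut})/\ell^2$.

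**The main obstacle** will be step~(4), the lower bound on the separation probability in terms of the min-cut divided by $\ell^2$. Everything else is essentially bookkeeping: defining the layered graph, checking feasibility of the natural flow, and invoking \Cref{lem:bagRealisation} and \Cref{lem:traverselInv}. But the min-cut bound requires real probabilistic content — one must show that the Markov chain cannot "concentrate" all its cutting probability too thinly, i.e.\ that a min-cut of size $c$ forces $s$ and $t$ to be separated with probability $\Omega(c/\ell^2)$. I expect this to go via a flow-decomposition of a min-cut into $\ell$ "canonical" cuts (one per layer), the observation that the chain crosses \emph{some} layer's worth of cut mass at least once (since total cut value is $c$ and it is spread over $\le \ell$ layers, some layer carries $\ge c/\ell$), and then a gambler's-ruin / quadratic-in-$\ell$ loss to convert a single crossing into an actual end-to-end separation. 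The factor $\ell^2$ (rather than $\ell$) is the tell-tale sign of exactly such a two-stage argument, so I would organize the proof around isolating a single heavily-weighted layer and then bounding the return probability of the induced one-dimensional walk.
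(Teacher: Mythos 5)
Your high-level picture is right — build a layered graph on the conditioning sets $I_j = B_{v_j}\cap B_{v_{j+1}}$, get the flow lower bound from the consistency of $\{\mu_L\}$, and get the cut upper bound from a potential/martingale argument that incurs an $\ell^2$ loss — but two of your specific steps contain genuine errors.

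\textbf{The ``$s$-tag'' breaks the construction.} The paper's graph $H$ has layer $L_i=\mathcal F[I_i]$ \emph{without} any record of $f(s)$, and this is essential: edge weights are joint probabilities $w_H(f_i,f_{i+1})=\Pr[f|_{I_i}=f_i\wedge f|_{I_{i+1}}=f_{i+1}]$, which are the \emph{same} under $\aset$ and under $\mu$ because $I_i\cup I_{i+1}\subseteq B_{v_{i+1}}$ is a single bag (\Cref{lem:bagRealisation}). Once you add the $s$-value to the state at layer $j\ge 2$, your edge weights are joint probabilities over $I_j,I_{j+1},s$, and these are \emph{not} equal under $\aset$ and $\mu$ (that is precisely the discrepancy the lemma is measuring). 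Also, you oscillate between conditional transition probabilities (for internal edges) and joint probabilities (for edges out of $s_i$), which makes the cut value ill-defined; the paper uses joint probabilities throughout, so that edge weight literally equals the probability the random walk traverses that edge.

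\textbf{Part 2 is circular as written.} You argue the natural path-decomposition flow has value $\Pr_{f\sim\aset}[f(s)=i\,\&\,f(t)=1-i]$, and then claim that this ``equals the $\mu$-probability $\ldots$ because each bag realizes $\mu_{B_{v_j}}$ exactly.'' But $s$ and $t$ are by assumption never in a common bag, so bag realization says nothing about the joint marginal on $\{s,t\}$; if $\algcut(s,t)=\lpcut(s,t)$ held unconditionally, the lemma — and indeed \Cref{thm:algo:main} — would be vacuous. The correct argument constructs a \emph{different} flow directly from $\mu$: set $g(f_i,f_{i+1}) = \Pr_{f\sim\mu_{B_{v_{i+1}}\cup\{s,t\}}}[f(s)=0,f(t)=1,f|_{I_i}=f_i,f|_{I_{i+1}}=f_{i+1}]$, verify conservation from the consistency of the $\mu_L$'s, verify $g(e)\le w_H(e)$ by dropping the conditioning events (here one uses that $w_H$ agrees with the $\mu$-marginal on $I_i\cup I_{i+1}$), and read off the flow value $\Pr_\mu[f(s)=0,f(t)=1]$ at the source.

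\textbf{Part 1 is under-specified.} Your instinct that this is a potential/variance argument with a quadratic-in-$\ell$ loss is correct, but the sketch about ``isolating a single heavily-weighted layer'' is not how the bound goes. The paper's potential is $A(v)=\Pr[X_0=s_0\mid X_i=v]-\tfrac12$ with $\phi(i)=\mathrm{Var}[A(X_i)]$; CKR's Lemma~5.2 gives $\phi(0)-\phi(\ell)\le 2\Pr[X_0=s_0\wedge X_\ell=t_1]$, and CKR's Lemma~5.4 bounds the total weight of edges with $|A(f_i)-A(f_{i+1})|\ge\rho$ by $(\phi(0)-\phi(\ell))/\rho^2$. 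Since every $s_0$--$t_1$ path has exactly $\ell$ edges and $A(s_0)-A(t_1)>1/2$ (in the nontrivial case $A(t_1)<0$), one of those edges has gradient $>1/(2\ell)$; those high-gradient edges form an $(s_0,t_1)$-cut of capacity $O(\ell^2)(\phi(0)-\phi(\ell))\le O(\ell^2)\Pr[X_0=s_0\wedge X_\ell=t_1]$. The cut is global across all layers, not concentrated in one.
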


\Cref{thm:algo:main} immediately follows from this lemma.
\begin{proof}[Proof of \Cref{thm:algo:main}]
    We run the algorithm of \chlamtacetal to get some $V(G)$-assignment~$f$.

    Consider a pair $\set{s,t} \in E_D$.
    Using \Cref{lem:traverselInv} and \Cref{lem:distpreservereduction}, we can reduce the analysis to a path $P$ of length at most $\cdiam$, which is a simplification of a path in $\tset$. %
    Now, by \Cref{lem:algo:markov} and max-flow-min-cut theorem, we get that 
    \begin{align*}
    \algcut(s,t)
    &= \Pr_{f \sim \aset}[f(s) = 0\; \&\; f(t) = 1] + \Pr_{f \sim \aset}[f(s) = 1\; \&\; f(t) = 0] \\
    &\geq \Omega\paren{\frac{1}{\cdiam^2}} \paren{\operatorname{mincut}(s_0, t_1) + \operatorname{mincut}(s_1, t_0)} \\
    &= \Omega\paren{\frac{1}{\cdiam^2}} \paren{\operatorname{maxflow}(s_0, t_1) + \operatorname{maxflow}(s_1, t_0)} \\
    &\geq \Omega\paren{\frac{1}{\cdiam^2}} \paren{\Pr_{f \sim \mu}[f(s) =0\; \&\; f(t) = 1] + \Pr_{f \sim \mu}[f(s) =1\; \&\; f(t) = 0]} \\
    &= \Omega\paren{\frac{1}{\cdiam^2}} \lpcut(s,t) \enspace .
    \end{align*}
    Therefore, $f$ separates each pair $\set{s,t}$ with probability that is a factor of $O(\cdiam^2)$ away from $\lpcut(s,t)$.

    Applying Observation \ref{obs:derandomisation} with $c = \Omega(1/\cdiam^2)$, we can obtain (deterministically) an assignment $f^*$ that is an $O(\cdiam^2)$-approximation for the \scp instance.
\end{proof}

The rest of this section is dedicated to proving the \Cref{lem:algo:markov}.
The tools needed for this proof are implicit in the work of \chlamtacetal~\cite{ckr10}.
We restate them for the sake of completeness and in order to adjust it to our terminology. 

The section is organized as follows: first, we describe the construction of our graph $H$, and then we proceed to analyze the values of maximum flow and minimum cut.
We will only analyze the flow and cut for $i= 0$, that is, $(s_0,t_1)$-flow and $(s_0,t_1)$-cut. The other case is analogous.

\paragraph{Construction of Graph $H$:}
Without loss of generality, we can assume that the distributions $\set{\mu_L}_L$ are symmetric in the labels $\{0,1\}$, see \Cref{lem:symmetrization}.
In particular, this gives $\Pr[f(v) = 1] =\linebreak \Pr[f(v) = 0] = 1/2$ for any vertex $v$.

The rounding can be modeled by a simple Markov process. %
Denote by $I_0,\dots, I_\ell$ the sets that are conditioned on in \autoref{alg:Chlamtac}, $I_i = B_{v_i} \cap B_{v_{i+1}}$ for $i \in \set{1,\ldots,
\ell-1}$; we refer to these sets as \emph{conditioning sets}.

For the initial and final sets of the rounding procedure we take $I_0 = \{s\}$, $I_\ell = \{t\}$.
Now we are ready to describe our graph $H$:
\begin{itemize}
    \item \textbf{Vertices}: %
    Vertices of $H$ are arranged into layers $L_0,\dots,L_\ell$ with $L_i = \mathcal{F}[I_i]$. 
    Observe that $|L_i| = 2^{|I_i|}$. %
    The vertices of $H$ represent the intermediate states the algorithm might
    reach.

    \item \textbf{Edges}: %
    For each $i$, there is a directed edge from every vertex in $L_i$ to every vertex in $L_{i+1}$. %
    The weight of the edge $(f_i, f_{i+1})$, for $f_i \in L_i$, $f_{i+1} \in L_{i+1}$, is equal to the  probability of joint event,
    $w_H(f_i, f_{i+1}) = \Pr[f|_{I_i} = f_i \wedge  f|_{I_{i+1}} = f_{i+1}]$.

    We remark that the weight is 0 whenever $f_i$ and $f_{i+1}$ are
    contradictory, and that probabilities are well defined, as $I_i \cup I_{i+1} \subseteq B_{i+1}$.
\end{itemize}

Observe that the weight of an edge is the probability that both of its endpoints are reached by the algorithm, and hence the probability that the algorithm transitions along that edge.

\newcommand{\iset}{{\mathcal I}}

\begin{observation}
  Let $\iset = \bigcup_{i} I_i$.
  The distribution $\aset|_{\iset}$ can be viewed as the following random walk in~$H$: Pick a random vertex in $L_0$ and start taking a random walk where each edge is taken with probability proportional to its weight. %
  Formally, once a node $f_i$ is reached, choose the next node $f_{i+1}$ with probability $w_H(f_i, f_{i+1})/\Pr[f|_{I_i} = f_i]$.
\end{observation}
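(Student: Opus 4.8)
The plan is to unwind the definition of {\sc SC-ROUND}: I would show that, when it is run along $P$ in the order $v_1,\dots,v_\ell$, the sequence of restrictions $(f|_{I_0},f|_{I_1},\dots,f|_{I_\ell})$ forms a Markov chain whose initial law on $L_0$ is $\aset|_{I_0}$ and whose transition law from $f_i \in L_i$ to $f_{i+1}\in L_{i+1}$ is $w_H(f_i,f_{i+1})/\Pr_{f\sim\aset}[f|_{I_i}=f_i]$, and then identify layer-sequences with assignments on $\iset$.

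First I would record the structural fact that, when {\sc SC-ROUND} processes bag $v_{i+1}$ for $i\ge 1$, the part of $B_{v_{i+1}}$ already assigned is exactly $I_i=B_{v_i}\cap B_{v_{i+1}}$: any vertex of $B_{v_{i+1}}$ lying in an earlier bag $B_{v_j}$ with $j\le i$ must also lie in $B_{v_i}$, because $v_i$ is on the path between $v_j$ and $v_{i+1}$ in $\tset$ and the bags containing a fixed vertex induce a connected subtree. Hence at step $i+1$ the algorithm draws $f|_{B_{v_{i+1}}\setminus I_i}$ from $\Pr_{f^*\sim\mu_{B_{v_{i+1}}}}[\,\cdot \mid f^*|_{I_i}=f|_{I_i}\,]$, which depends on all previously assigned values only through $f|_{I_i}$; and since $I_{i+1}\subseteq B_{v_{i+1}}=I_i\cup(B_{v_{i+1}}\setminus I_i)$, the restriction $f|_{I_{i+1}}$ is a function of $f|_{I_i}$ and these freshly sampled values. (The first bag $v_1$ is treated analogously: sampling $f|_{B_{v_1}}\sim\mu_{B_{v_1}}$ realizes $f|_{I_0}$ with marginal $\aset|_{I_0}$ and, conditioned on it, $f|_{I_1}$.) This establishes the Markov property. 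For the kernel, conditioning on $f|_{I_i}=f_i$ makes $f|_{B_{v_{i+1}}}$ distributed as $\Pr_{g\sim\mu_{B_{v_{i+1}}}}[\,\cdot \mid g|_{I_i}=f_i\,]$, so, using $I_i,I_{i+1}\subseteq B_{v_{i+1}}$,
\[
\Pr_{f\sim\aset}[f|_{I_{i+1}}=f_{i+1}\mid f|_{I_i}=f_i]=\frac{\Pr_{g\sim\mu_{B_{v_{i+1}}}}[g|_{I_i}=f_i\wedge g|_{I_{i+1}}=f_{i+1}]}{\Pr_{g\sim\mu_{B_{v_{i+1}}}}[g|_{I_i}=f_i]}.
\]
By \Cref{lem:bagRealisation} applied to the bag $B_{v_{i+1}}$, the numerator equals $\Pr_{f\sim\aset}[f|_{I_i}=f_i\wedge f|_{I_{i+1}}=f_{i+1}]=w_H(f_i,f_{i+1})$ and the denominator equals $\Pr_{f\sim\aset}[f|_{I_i}=f_i]$ (that these marginals, and $w_H$ itself, do not depend on the particular bag used to evaluate them is the consistency property $\mu_L|_{L\cap L'}=\mu_{L'}|_{L\cap L'}$). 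This is exactly the transition probability claimed, and it sums to $1$ over $f_{i+1}\in L_{i+1}$ because $\sum_{f_{i+1}} w_H(f_i,f_{i+1})=\Pr_{f\sim\aset}[f|_{I_i}=f_i]$. Since a Markov chain is determined by its initial law and transition kernel, the law of $(f|_{I_0},\dots,f|_{I_\ell})$ under $\aset$ is exactly that of the random walk in the statement.

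It remains to pass from layer-sequences to assignments on $\iset=\bigcup_i I_i$. Any edge $(f_i,f_{i+1})$ of positive weight has $f_i$ and $f_{i+1}$ agreeing on $I_i\cap I_{i+1}$ (a contradictory pair has weight $0$), so the walk almost surely produces a sequence that is consistent on consecutive layers; and since each vertex of $G$ occurs in a \emph{contiguous} block of $I_0,\dots,I_\ell$ (again by the connectivity property of tree decompositions), consistency on consecutive layers forces global consistency. The sequence therefore glues to a unique $\iset$-assignment, the gluing map is a bijection onto $\mathcal{F}[\iset]$, and pushing the walk's law forward along it gives precisely $\aset|_{\iset}$. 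I do not expect any genuine obstacle here, as the statement merely reorganizes the algorithm's own definition; the only points that need care are the two connectivity arguments (identifying the already-assigned part of $B_{v_{i+1}}$ with $I_i$, and the contiguity of a vertex's occurrences among the $I_i$) and the bookkeeping that moves between $\aset$ and the distributions $\mu_L$ via \Cref{lem:bagRealisation}.
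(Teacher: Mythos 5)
Your proof is correct, and since the paper states this as an \emph{Observation} without proof (viewing it as immediate from the construction of $H$ and the definition of {\sc SC-Round}), there is no alternative argument to compare against; your write-up simply supplies the definitional unwinding the paper leaves implicit. The two points you flag as the only places requiring care are indeed the right ones: the connectivity property of tree decompositions guarantees both that the already-fixed part of $B_{v_{i+1}}$ is exactly $I_i=B_{v_i}\cap B_{v_{i+1}}$ (so the process is Markov and the kernel is the claimed ratio, which \Cref{lem:bagRealisation} and the Sherali--Adams consistency condition let you express via $w_H$), and that each vertex of $G$ appears in a contiguous block of the $I_i$'s, so that consecutive-layer consistency glues to a unique $\iset$-assignment.
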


At this point, we rename $\aset := \aset|_{\iset}$.
Notice that the layer $L_0$ contains two vertices corresponding to the assignment $f(s) = 0$ and $f(s) = 1$, respectively. We denote them by $L_0 = \{s_0, s_1\}$. %
Similarly, $L_{\ell} = \{t_0, t_1\}$. %
Notice further that $\Pr_{f \sim \aset}[f(s) = 0, f(t) = 1]$ is exactly the probability that the random walk starts at $s_0 \in L_0$ and ends at $t_1 \in L_{\ell}$.

\paragraph{Maximum $(s_0,t_1)$-Flow:}
We are now ready to show that the value of the maximum $(s_0,t_1)$-flow is at least $\Pr_{f \sim \mu}[f(s) =0, f(t)=1]$.

We define the flow $g\colon E(H) \to \mathbb{R}_{\geq 0}$ as follows, for $i \in \set{1,\ldots,
\ell-1}$, $f_i \in L_i$ and $f_{i+1} \in L_{i+1}$:
\[
	g(f_i, f_{i+1}) = \Pr_{f\sim \mu_{B_{v_{i+1}} \cup \{s,t\} }}[ f(s) = 0, f(t) = 1, f|_{I_i} = f_i, f|_{I_{i+1}} = f_{i+1} ] \enspace .
\]
We remark that $g$ is an $s_0$-$t_1$-flow, that is, it satisfies flow conservation at all vertices in $H$ except $s_0,t_1$, and the capacities of
graph $H$ are respected, that is, $g(e) \leq w_H(e)$ for all $e \in E(H)$. %
The value of~$g$ is given by: %
\begin{align*}
	\sum_{(s_0,f^*) \in \delta^+(s_0)}g(s_0, f^*) &= \sum_{f^* \in \mathcal{F}[I_1]}\Pr_{f  \sim \mu_{B_{v_1} \cup \{s,t\} } }[f(s) = 0, f(t) = 1, f|_{I_1} = f^*]\\
	&=\Pr_{f  \sim \mu_{B_{v_1} \cup \{s,t\} } }[f(s) = 0, f(t) = 1] \enspace .
\end{align*}

This concludes the proof of Point \ref{lem:algo:markov:ii} of \Cref{lem:algo:markov}.

\paragraph{A Potential Function:}
Before we show a cut with the desired capacity, we need to introduce some notation.
For $i = 0,\ldots, \ell$, let $X_i$ be a random variable indicating the vertex in $L_i$ visited by the random walk (i.e.~picked by the algorithm.
We denote by $\textbf{X}= X_0 X_1 \ldots X_{\ell}$ the path taken in the random walk process. %
We can interchangeably view distribution $\aset$ as either the distribution that samples an assignment $f\colon \iset \rightarrow \{0,1\}$ or one that samples a (random walk) path $\textbf{X}$.

We define, for every layer $L_i$ and every vertex $v \in L_i$,
\[
	A(v):= \Pr_{\textbf{X} \sim \aset} [X_0 = s_0 \mid X_i = v] -\frac{1}{2} \enspace .
\]
Intuitively, this function captures the extent to which $v$ has information about the initial state of the Markov process. %
On the one hand, if $A(v)$ is equal to $0$, $v$ knows essentially nothing about $X_0$, the choice of $v$ does not imply anything about $X_0$. %
On the other hand, if $A(v)$ is far from $0$, then we can glean a lot of information about $X_0$ from $v$ being visited; %
in particular, if the probability that $s$ and $t$ are cut is low, we must have $A(t_1) \approx - 1/2$.

To track how $A$ changes from layer to layer, we use the potential function $\phi\colon \{0,\dots,\ell\} \to \mathbb{R}_{\geq 0}$, defined as:
\[
	\phi(i) := \operatorname{Var}_{\textbf{X} \sim \aset}[A(X_i)] \enspace. 
\]

The following lemma by \chlamtacetal bounds the change in potential in terms of the probability that $X_0 = s_0$ and $X_\ell = t_1$.
\begin{lemma}[\cite{ckr10}, Lemma 5.2]
	It holds $\phi(0) - \phi(\ell) \leq 2\Pr[X_0 = s_0 \wedge X_\ell = t_1]$ \enspace .
\end{lemma}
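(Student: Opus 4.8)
The plan is to avoid the layer-by-layer bookkeeping suggested by the potential-function setup and instead evaluate $\phi(0)$ and $\phi(\ell)$ directly, exploiting that the first and last layers each consist of exactly two vertices. First I would record that $A(X_i)$ is centered for every $i$: by total probability, $\mathbb{E}_{\mathbf{X}\sim\aset}[A(X_i)] = \sum_{v\in L_i}\Pr[X_i=v]\,\Pr[X_0=s_0\mid X_i=v] - \tfrac12 = \Pr[X_0=s_0] - \tfrac12 = 0$, where we used the label symmetry of $\{\mu_L\}_L$ which forces $\Pr[X_0=s_0] = \Pr[f(s)=0] = \tfrac12$. Hence $\phi(i) = \mathbb{E}[A(X_i)^2]$, and it suffices to compute these two second moments.

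For layer $0$ this is immediate: $L_0 = \{s_0,s_1\}$, $X_0$ is uniform on $\{s_0,s_1\}$ by symmetrization, and $A(s_0) = \tfrac12$, $A(s_1) = -\tfrac12$, so $\phi(0) = \tfrac14$. For layer $\ell$ I would set $a := \Pr_{\mathbf{X}\sim\aset}[X_0 = s_0 \mid X_\ell = t_1]$. The key point is that label symmetry also yields $\Pr[X_0 = s_0 \mid X_\ell = t_0] = 1 - a$: the measure-preserving involution that swaps the labels $0\leftrightarrow 1$ throughout sends the event $\{X_0=s_0,\,X_\ell = t_0\}$ to $\{X_0 = s_1,\,X_\ell = t_1\}$, so $\Pr[X_0=s_0,\,X_\ell=t_0] = \Pr[X_0=s_1,\,X_\ell=t_1] = \Pr[X_\ell=t_1] - \Pr[X_0=s_0,\,X_\ell=t_1] = \tfrac12 - \tfrac12 a$, and dividing by $\Pr[X_\ell=t_0]=\tfrac12$ gives $1-a$. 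Therefore $A(X_\ell)$ takes the values $a - \tfrac12$ and $-(a-\tfrac12)$, each with probability $\tfrac12$, so $\phi(\ell) = (a - \tfrac12)^2$.

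Combining, $\phi(0) - \phi(\ell) = \tfrac14 - (a-\tfrac12)^2 = a(1-a)$, while $\Pr[X_0 = s_0 \wedge X_\ell = t_1] = \Pr[X_\ell = t_1]\cdot a = \tfrac12 a$. The asserted inequality is then exactly $a(1-a) \le 2\cdot\tfrac12 a = a$, which holds since $a\in[0,1]$ (and the factor $2$ is seen to be tight as $a\to 0$). I do not anticipate a genuine obstacle; the only delicate points will be invoking the label-symmetry involution correctly to pin down $\Pr[X_0=s_0\mid X_\ell=t_0]=1-a$, and the observation that $A$ is centered so that $\phi(i) = \mathbb{E}[A(X_i)^2]$. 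If one prefers a proof that actually tracks the intermediate layers, one can instead show $A(X_{i+1}) = \mathbb{E}[A(X_i)\mid X_{i+1}]$ using the Markov property (concretely, $X_0$ is conditionally independent of $X_{i+1}$ given $X_i$), deduce $\phi(i) - \phi(i+1) = \mathbb{E}\big[\operatorname{Var}(A(X_i)\mid X_{i+1})\big] \ge 0$ from the law of total variance, and telescope; but this is strictly more work than the direct evaluation and arrives at the same value $a(1-a)$.
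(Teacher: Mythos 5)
Your proof is correct and takes essentially the same route as the proof in CKR (which this paper cites without reproducing): reduce $\phi(i)$ to a second moment via centering, evaluate $\phi(0)=\tfrac14$ and $\phi(\ell)=(a-\tfrac12)^2$ directly from the two-element endpoint layers using label symmetry, and finish with $a(1-a)\le a$. Your aside that the primary argument needs only symmetry and not the Markov property (which is reserved for the cut bound, Lemma~5.4 of~\cite{ckr10}) is a nice observation but does not change the substance.
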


\paragraph{Minimum $(s_0,t_1)$-cut:}
We are now ready to analyze the value of minimum $(s_0,t_1)$-cut in $H$. It suffices to give a lower bound on $\phi(0) - \phi(\ell)$.
This is is possible by the following lemma which is proved implicitly by \chlamtacetal~\cite{ckr10}.
\begin{lemma}[\cite{ckr10}, Lemma 5.4]
	\label{lem:MarkovCut}
	Let $C$ be the set of edges $(f_i, f_{i+1})$ in $E(H)$ such that $|A(f_i) - A(f_{i+1})|$ is at least some threshold $\rho > 0$. Then $\sum_{e \in C}w_H(e) \leq (\phi(0) - \phi(\ell)) \cdot 1/\rho^2$.
\end{lemma}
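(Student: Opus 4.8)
The plan is to exhibit $C$ as a genuine $(s_0,t_1)$-cut and then charge the total weight of its edges against the potential drop $\phi(0)-\phi(\ell)$. First I would verify that $C = \{(f_i,f_{i+1}) \in E(H) : |A(f_i)-A(f_{i+1})| \ge \rho\}$ separates $s_0$ from $t_1$ for a suitable $\rho$. The key point is that $A(s_0) = \Pr[X_0 = s_0 \mid X_0 = s_0] - 1/2 = 1/2$ and, when $s$ and $t$ are (almost) never cut, $A(t_1) \approx -1/2$, so any walk from $s_0$ to $t_1$ must accumulate a total change of roughly $1$ in the value of $A$ along its $\ell$ edges; hence at least one edge on that walk has $|A(f_i)-A(f_{i+1})| \ge 1/\ell$, i.e.\ lies in $C$. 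This is exactly where the factor $1/\ell^2$ (and thus the $1/\Delta(\tset)^2$ in \Cref{thm:algo:main}) enters: we take $\rho = \Theta(1/\ell)$.

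The heart of the argument is the weighted telescoping/variance estimate. I would write, for each layer transition $i \to i+1$, the contribution of the edges to the drop in $\phi$. Using that $A$ is a \emph{martingale-like} quantity along the walk — more precisely, $\mathbb{E}_{\mathbf X \sim \aset}[A(X_{i+1}) \mid X_i] = A(X_i)$, which follows because conditioning on $X_0 = s_0$ commutes with the Markov step (the walk's transition kernel from $X_i$ does not depend on $X_0$, reversibility/symmetry of $\{\mu_L\}$ is used here) — one gets the standard orthogonal decomposition
\[
\phi(i) - \phi(i+1) = \operatorname{Var}[A(X_i)] - \operatorname{Var}[A(X_{i+1})] = \mathbb{E}\big[(A(X_{i+1}) - A(X_i))^2\big].
\]
Summing over $i$ telescopes the left side to $\phi(0) - \phi(\ell)$, so
\[
\phi(0) - \phi(\ell) = \sum_{i=0}^{\ell-1} \mathbb{E}_{\mathbf X}\big[(A(X_{i+1}) - A(X_i))^2\big] = \sum_{i=0}^{\ell-1} \sum_{(f_i,f_{i+1})} w_H(f_i,f_{i+1}) \, (A(f_i) - A(f_{i+1}))^2,
\]
where the inner sum is over all edges between $L_i$ and $L_{i+1}$ and we used that $w_H(f_i,f_{i+1})$ is precisely the probability the walk traverses that edge. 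Restricting the right-hand sum to edges in $C$, where $(A(f_i)-A(f_{i+1}))^2 \ge \rho^2$, yields $\phi(0) - \phi(\ell) \ge \rho^2 \sum_{e \in C} w_H(e)$, which rearranges to the claimed $\sum_{e \in C} w_H(e) \le (\phi(0) - \phi(\ell)) / \rho^2$.

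The main obstacle I anticipate is establishing the martingale identity $\mathbb{E}[A(X_{i+1}) \mid X_i] = A(X_i)$ cleanly — i.e.\ checking that the conditional expectation of $\Pr[X_0 = s_0 \mid X_{i+1} = \cdot]$ under a random forward step from $X_i$ equals $\Pr[X_0 = s_0 \mid X_i]$. This is a reversibility statement about the random walk on $H$ and needs the symmetrization assumption ($\Pr[f(v)=0] = \Pr[f(v)=1] = 1/2$, via \Cref{lem:symmetrization}) together with Bayes' rule; once it is in place the variance decomposition is routine. A secondary (minor) subtlety is the boundary layers $L_0, L_\ell$ of size $2$ and the fact that the lemma as stated does not itself require $\rho = 1/\ell$ — it holds for every $\rho > 0$ — so the plan is to prove the displayed inequality for arbitrary $\rho$ and defer the choice $\rho = \Theta(1/\ell)$ to the place where $C$ is argued to be an $(s_0,t_1)$-cut (as already done in the proof of \Cref{thm:algo:main} above).
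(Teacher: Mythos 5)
Your plan follows the CKR argument in structure (orthogonal variance decomposition, telescoping, then restricting to $C$), and the key displayed identity
\[
\phi(0) - \phi(\ell) \;=\; \sum_{i=0}^{\ell-1}\sum_{(f_i,f_{i+1})} w_H(f_i,f_{i+1})\,\bigl(A(f_i)-A(f_{i+1})\bigr)^2
\]
is correct, as is the observation that $w_H(f_i,f_{i+1})$ equals the probability the walk traverses that edge. But the martingale identity you invoke to derive it is stated in the wrong direction, and this is a genuine error rather than a slip. The identity that actually holds is the \emph{backward} martingale
\[
\mathbb{E}_{\mathbf X \sim \aset}\bigl[A(X_i)\mid X_{i+1}\bigr] \;=\; A(X_{i+1}),
\]
not the forward one $\mathbb{E}[A(X_{i+1})\mid X_i]=A(X_i)$ that you wrote. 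The two cannot both hold (that would force $A(X_i)=A(X_{i+1})$ a.s.), and it is only the backward version that yields $\operatorname{Var}[A(X_i)] = \operatorname{Var}[A(X_{i+1})] + \mathbb{E}\bigl[(A(X_i)-A(X_{i+1}))^2\bigr]$, i.e.\ the potential \emph{drops} from layer $i$ to $i+1$. If the forward version held, the variance decomposition would run the opposite way, contradicting the very inequality you then sum and rearrange. Moreover, the backward martingale needs neither reversibility nor the symmetry of $\{\mu_L\}$: it is a two-line consequence of the Markov property $X_0 \perp X_{i+1}\mid X_i$ and the law of total probability:
\[
\mathbb{E}\bigl[A(X_i)\mid X_{i+1}=w\bigr]
= \sum_v \Pr[X_i=v\mid X_{i+1}=w]\,\Pr[X_0=s_0\mid X_i=v,\,X_{i+1}=w] - \tfrac12
= \Pr[X_0=s_0\mid X_{i+1}=w]-\tfrac12 = A(w).
\]
So the appeal to symmetrization in this spot is a red herring (symmetrization is used elsewhere in the overall argument, e.g.\ to anchor the unconditional marginals). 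With the martingale direction corrected, the remainder of your proposal — telescoping the variance differences, identifying the edge weights with traversal probabilities, and bounding the sum over $C$ by $\rho^{-2}(\phi(0)-\phi(\ell))$ — is sound and matches the proof implicit in CKR. Your final observation that the lemma holds for arbitrary $\rho>0$ and that the choice $\rho=\Theta(1/\ell)$ belongs in the application is also correct and well-placed.
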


We can apply \Cref{lem:MarkovCut} in the following fashion.
Suppose $A(t_1) \geq 0$. In that case we have $\Pr[X_0 = s_0 \mid X_\ell = t_1] \geq 1/2$, so $s$ and $t$ are cut with probability at least $\frac{1}{2}\lpcut(s,t)$.
This error is already a small enough, so assume $A(t_1) < 0$. Then $A(s_0) - A(t_1) > 1/2$.
Since every path from $s_0$ to $t_1$ has exactly $\ell$ edges, any such path must contain an edge $(f_i, f_j)$ with $A(f_i) -A(f_j) > 1/ (2\ell)$.
Cutting all such edges therefore separates $s_0$ and $t_1$. Hence, by applying \Cref{lem:MarkovCut}, the minimum $s_0$-$t_1$-cut has size at most 
\begin{align*}
	O(\ell^2) (\phi(0) - \phi(\ell))
	&\leq O(\ell^2) \Pr[X_0 = s_0 \wedge X_\ell = t_1]\\
	&=O(\ell^2) \Pr[f(s) = 0 \wedge f(t) = 1] \enspace .
\end{align*}

This concludes the proof of Point \ref{lem:algo:markov:i} of \Cref{lem:algo:markov}. We see that the cutting probability predicted by the distributions is realised by the rounded solution $f$, up to a factor $\Omega(1/\ell^2)$.

This gives an alternative to the analysis given by \chlamtacetal whose constant depends on the size of the layers of $H$ rather than the number of layers.
While the layer sizes depend only on $k$, the dependence is exponential.
The number of layers is a priori $\log(n)$, which would give a worse approximation guarantee.
However, we will show how to modify a tree decomposition to ensure that $H$ has few layers.

\section{Combinatorially Shallow Tree Decompositions}

In this section, we show how to construct tree decompositions with low combinatorial diameter, thus achieving the approximation results stated in \Cref{thm:main-intro}. %
We start by restricting our consideration to decompositions that are shallow in the traditional sense. %
For a given graph $G$ with treewidth $k$, we consider a tree decomposition $(\tset, \set{B_i}_{i \in V(\tset)})$ with diameter $\diam=O(\log n)$ and width $O(k)$~\cite{Bodlaender88}. %
Fix some root $r$ in $V(\tset)$.

Our goal is now to modify $\tset$ such that every node has a combinatorially short path to $r$. %
This is a necessary requirement, but perhaps surprisingly it is not sufficient. %
The combinatorial lengths of paths do not necessarily induce a metric on $V(\tset)$%
\footnote{Consider bags $\{ab\}, \{abc\}, \{acd\}, \{ade\}, \{aef\}, \{afg\}, \{a\}$ occuring in that order as a path. The whole path can be reduced to just the endpoints. The subpath $\{ab\}, \{abc\}, \{acd\}, \{ade\}, \{aef\}, \{afg\}$ is irreducible. Thus the distance from $\{ab\}$ to $\{afg\}$ is larger than the sum of the distances from $\{ab\}$ to $\{a\}$ and $\{afg\}$ to $\{a\}$.}, %
and therefore bounding the length to $r$ does not on its own suffice to bound the combinatorial diameter.

We will not show explicitly that the modified structures are in fact tree decompositions.
The proofs are straightforward using \Cref{lem:TreeDecompMonotonicity}. 

We introduce three objects, which we call \textbf{bridges}, \textbf{highways}, and \textbf{super-highways}, and show that they can be used to prove the three parts of \Cref{thm:main-intro}. 

\subsection{Bridges}
\begin{figure}
	\includegraphics[width=\textwidth]{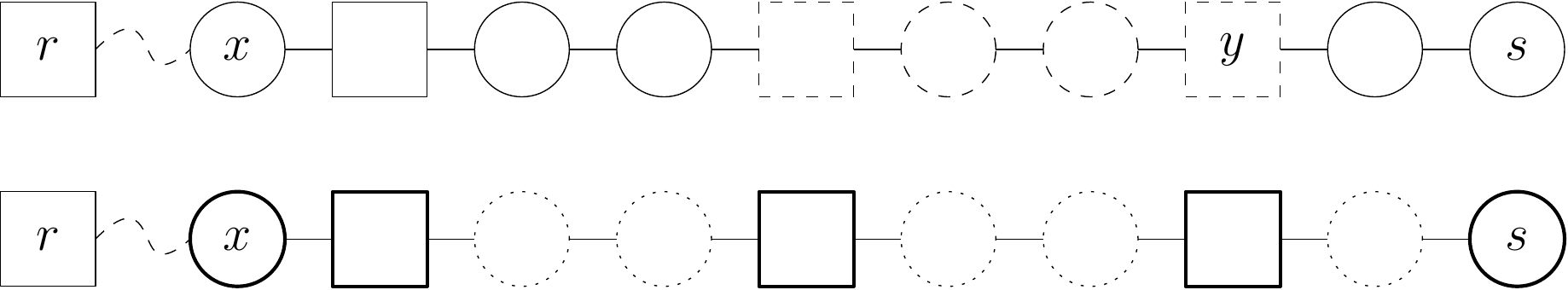}
	\caption{Illustration of a path from the root to some node $s$. The square nodes are the synchronization nodes. The bridge from $y$ to its synchronization ancestor is marked with dashes in the first image. The dotted nodes in the second image mark those nodes which can be removed when simplifying the $x$-$s$-path in $\tset'$.  }
	\label{fig:bridges}
\end{figure}
Fix a parameter $\lambda \in \set{1,\dots, \diam}$. Define $\ell\colon V(\tset) \to \mathbb{N}_0$ to be the \emph{level} of a node in $\tset$, that is, $\ell(v)$ is the number of edges on $\tset_{v \leftrightarrow r}$.

\begin{definition}
	We call a node a \emph{synchronization node} if its level is a multiple of $\lambda$.
	Define also the \emph{synchronization ancestor} $\sigma(v)$ of any node $v$ to be the first node on the path from $v$ to $r$ that is a synchronization node, excluding $v$ itself.
\end{definition}

We can construct a tree decomposition $(\tset', \set{B'_i}_i)$ by taking $\tset'=\tset$ and setting $B_v' = B(\tset_{v \leftrightarrow \sigma(v)})$, that is, the new bag is obtained by combining all the bags from $v$ up to its synchronization ancestor. %
This increases the width of the decomposition by a factor of at most $\lambda$. We may view this path connecting $v$ to the synchronization point as a {\bf bridge} crossing over all intermediate nodes in one step.

\begin{lemma}
	\label{lem:bridges:diam}
	$\tset'$ has combinatorial diameter $O(\diam / \lambda)$.
\end{lemma}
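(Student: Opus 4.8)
The plan is to show that for any two nodes $a, b \in V(\tset')$, the path $\tset'_{a \leftrightarrow b}$ admits a simplification of length $O(\delta/\lambda)$. First I would observe that $\tset'$ has the same underlying tree as $\tset$, so $\tset'_{a\leftrightarrow b}$ uses the same node set as $\tset_{a\leftrightarrow b}$, which has at most $\delta$ edges. Let $m$ be the meeting point (lowest common ancestor, w.r.t.\ root $r$) of $a$ and $b$, so the path splits into an $a$-to-$m$ piece and an $m$-to-$b$ piece. I would handle each piece separately: it suffices to show that the ancestor path $\tset'_{v \leftrightarrow m}$ from any node $v$ up to one of its ancestors $m$ can be simplified to $O(\delta/\lambda)$ nodes, since concatenating two such simplifications and noting that the single "seam" node at $m$ costs at most $O(1)$ extra gives the bound for the whole path. (One should double-check, using the footnote's warning that combinatorial length is not a metric, that concatenation only helps — but since simplification is a local deletion rule, any reduction valid on a subpath remains valid as a reduction on the full path, so the length of the whole path is at most the sum of the reduced lengths of the two halves.)

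The core step is therefore: on an ancestor path $v = u_0, u_1, \dots, u_q = m$ with $u_{j+1} = p(u_j)$ (or $m$ an intermediate ancestor), the new bags are $B'_{u_j} = B(\tset_{u_j \leftrightarrow \sigma(u_j)})$. The key structural fact is that for a node $u_j$ that is \emph{not} a synchronization node and whose parent $u_{j+1}$ is also not a synchronization node, we have $\sigma(u_j) = \sigma(u_{j+1})$, and moreover $B'_{u_j} \supseteq B'_{u_{j+1}}$ because $\tset_{u_{j+1} \leftrightarrow \sigma(u_{j+1})} \subseteq \tset_{u_j \leftrightarrow \sigma(u_j)}$. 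I would then argue that any internal node $u_j$ of the ancestor path that is sandwiched between two non-synchronization consecutive ancestors (so that $B'_{u_{j-1}} \supseteq B'_{u_j} \supseteq B'_{u_{j+1}}$ along the path) is $(a,b)$-redundant: its two path-neighbors are $u_{j-1}$ and $u_{j+1}$, and $B'_{u_j} \cap B'_{u_{j+1}} = B'_{u_{j+1}} \subseteq B'_{u_{j-1}}$, so the redundancy condition $B_v \cap B_w \subseteq B_u$ is met (with $u = u_{j-1}$, $v = u_j$, $w = u_{j+1}$). Hence we may bypass every such node. After bypassing, the only nodes that can survive on the ancestor path are the synchronization nodes (levels that are multiples of $\lambda$) plus $O(1)$ endpoint/boundary nodes near $v$ and near $m$. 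Since the ancestor path has at most $\delta$ edges, it contains at most $\delta/\lambda + O(1)$ synchronization nodes, giving a simplification of length $O(\delta/\lambda)$.

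I would then assemble the pieces: each of the two halves of $\tset'_{a\leftrightarrow b}$ reduces to $O(\delta/\lambda)$ nodes, so the full path reduces to $O(\delta/\lambda)$ nodes, and taking the maximum over all pairs $a,b$ gives $\Delta(\tset') = O(\delta/\lambda)$. The main obstacle I anticipate is the bookkeeping at the boundaries — precisely, handling the nodes $u_j$ near $v$ itself (whose synchronization ancestor differs from those higher up) and near $m$, and making sure the bypass rule is applicable at each step even after earlier bypasses have changed which nodes are adjacent on the path (one must verify the nesting $B' \supseteq B' \supseteq \cdots$ is robust under deletions, which it is, because deleting a node only enlarges the "gap" it leaves and the containment was between the two surviving neighbors already). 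A secondary subtlety is confirming that the non-metric behavior flagged in the footnote does not bite us here: it cannot, because we are only ever \emph{upper}-bounding the length of a concrete path by exhibiting an explicit sequence of bypasses, never relying on triangle-inequality-style reasoning.
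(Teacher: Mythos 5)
Your proposal is correct and follows essentially the same route as the paper: split the path at the lowest common ancestor, observe that bags strictly increase as you descend between synchronization nodes so every interior non-synchronization node is redundant and can be bypassed, and count the $O(\cdiam/\lambda)$ surviving synchronization nodes on each half. A minor note: your bypass condition (both $u_j$ and its parent non-synchronization) is slightly stronger than needed — it suffices that $u_j$ alone is non-synchronization and interior, since then $B'_{u_j}\subseteq B'_{u_{j-1}}$ already gives $B'_{u_j}\cap B'_{u_{j+1}}\subseteq B'_{u_{j-1}}$ — but this only affects the constant, not the $O(\cdiam/\lambda)$ bound.
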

\begin{proof}
	Fix any two nodes $s,t \in V(\tset')$ and take $x$ to be their lowest common ancestor in $\tset'$.
	Then the combinatorial length of $\tset'_{s \leftrightarrow t}$ is at most the sum of the combinatorial lengths of $\tset'_{s \leftrightarrow x}$ and $\tset'_{x \leftrightarrow t}$.%
	We remark that triangle inequality holds in this case, because $x$ is on the path from $s$ to $t$. %
	Thus, it suffices to show that the combinatorial length of $\tset'_{s \leftrightarrow x}$ is $O(\diam / \lambda)$.
  The result follows analogously for $\tset'_{x \leftrightarrow t}$.

  Using the rules of \Cref{def:simplification}, we can bypass any node that is neither a synchronization node nor $s$ or $x$, since the bag of the unique child (in $\tset'_{s \leftrightarrow x}$) of such a node is a superset of its own bag. %
  Therefore, the path $\{v \in \tset'_{s \leftrightarrow x} | v = s \vee v=x \vee v\text{ is a synchronization node}\}$ is a simplification of~$\tset'_{s \leftrightarrow x}$.
  Since there are at most $\diam/\lambda$ synchronization nodes on any upward path, the lemma follows.
\end{proof}

This lemma, in conjunction with \Cref{thm:algo:main} and the fact that $\tset'$ can be computed in polynomial time from $\tset$, yields:
\begin{corollary}
	\label{cor:bridge:algo}
	For every $\lambda$, there is an algorithm that computes an $O((\log n / \lambda)^2)$-approximation for \scp instances where $G$ has treewidth at most $k$, in time $2^{O(\lambda k)}\poly(n)$.

	Setting $\lambda = \log n /k$ results in an $O(k^2)$-approximation in time $2^{k}\poly(n)$, while setting $\lambda=\log n$ gives an $O(1)$-approximation in time $n^{O(k)}$. 
\end{corollary}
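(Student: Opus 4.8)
The plan is to derive \Cref{cor:bridge:algo} directly by combining the bridge construction and \Cref{lem:bridges:diam} with the main algorithmic guarantee \Cref{thm:algo:main}. First I would observe that starting from the $O(\log n)$-diameter, $O(k)$-width decomposition $\tset$ guaranteed by \cite{Bodlaender88}, the construction $B_v' = B(\tset_{v\leftrightarrow\sigma(v)})$ produces a tree decomposition $\tset'$ on the same underlying tree; each bag is a union of at most $\lambda+1$ consecutive bags along an upward path, so $w(\tset') \le (\lambda+1)(k+1) - 1 = O(\lambda k)$. I would note this step is routine given the remark before \Cref{lem:bridges:diam} that one checks the tree-decomposition property via \Cref{lem:TreeDecompMonotonicity}, and that the construction is clearly computable in polynomial time from $\tset$.

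Next I would invoke \Cref{lem:bridges:diam} to get $\Delta(\tset') = O(\diam/\lambda) = O(\log n/\lambda)$. Feeding $\tset'$ into \Cref{thm:algo:main} then yields a factor-$O(\Delta(\tset')^2) = O((\log n/\lambda)^2)$ approximation for sparsest cut running in time $2^{O(w(\tset'))}\cdot\poly(n) = 2^{O(\lambda k)}\cdot\poly(n)$, which is exactly the first sentence of the corollary. The two concrete settings are then just substitutions: with $\lambda = \log n/k$ the run time is $2^{O(k)}\cdot\poly(n)$ and the ratio is $O((\log n/(\log n/k))^2) = O(k^2)$; with $\lambda = \log n$ the run time is $2^{O(k\log n)} = n^{O(k)}$ and the ratio is $O(1)$. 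One should check $\lambda$ is a valid choice, i.e. $\lambda \in \{1,\dots,\diam\}$: since $\diam = \Theta(\log n)$, $\lambda = \log n/k \le \diam$ holds for $k \ge 1$, and $\lambda \ge 1$ holds when $k = O(\log n)$ (which we may assume, as otherwise $2^{O(k)}\poly(n)$ already dominates $n^{O(k)}$ and one can fall back to GTW, or simply note the regime of interest is $k = O(\log n)$).

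I do not anticipate a genuine obstacle here, since all the heavy lifting is in \Cref{thm:algo:main} and \Cref{lem:bridges:diam}; the only mild subtlety is bookkeeping the $w(\tset')$ bound so that the exponent comes out as $O(\lambda k)$ rather than something worse, and making sure the rounding to integer levels in the definition of synchronization nodes does not change the asymptotics. I would write the proof in two or three sentences: state the width and diameter bounds on $\tset'$, apply \Cref{thm:algo:main}, and then plug in the two values of $\lambda$.
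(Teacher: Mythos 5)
Your proposal matches the paper's (implicit) derivation exactly: the paper simply asserts that \Cref{cor:bridge:algo} follows from \Cref{lem:bridges:diam} together with \Cref{thm:algo:main} and the fact that $\tset'$ is polynomial-time computable, which is precisely the chain you spell out. Your width bound $w(\tset') \le (\lambda+1)(k+1)-1 = O(\lambda k)$ (for $\lambda \ge 1$) and the two substitutions are the intended argument, so there is no gap in the main line of reasoning.

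One small correction on your side remark about the boundary case $k > \log n$: it is not true that $2^{O(k)}\poly(n)$ dominates $n^{O(k)}$, so ``falling back to GTW'' would make the running time worse, not better. The clean fix, which you in effect gesture at with your ``regime of interest'' comment, is simpler: if $\log n / k < 1$, take $\lambda = 1$ (i.e.~leave $\tset$ unchanged). Then $\Delta(\tset') = O(\log n) = O(k)$ and $w(\tset') = O(k)$, so \Cref{thm:algo:main} already gives an $O(\log^2 n) = O(k^2)$-approximation in time $2^{O(k)}\poly(n)$, and the first bullet of the corollary holds in this regime too without invoking any other algorithm.
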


\subsection{Highways}
\begin{figure}
	\includegraphics[width=\textwidth]{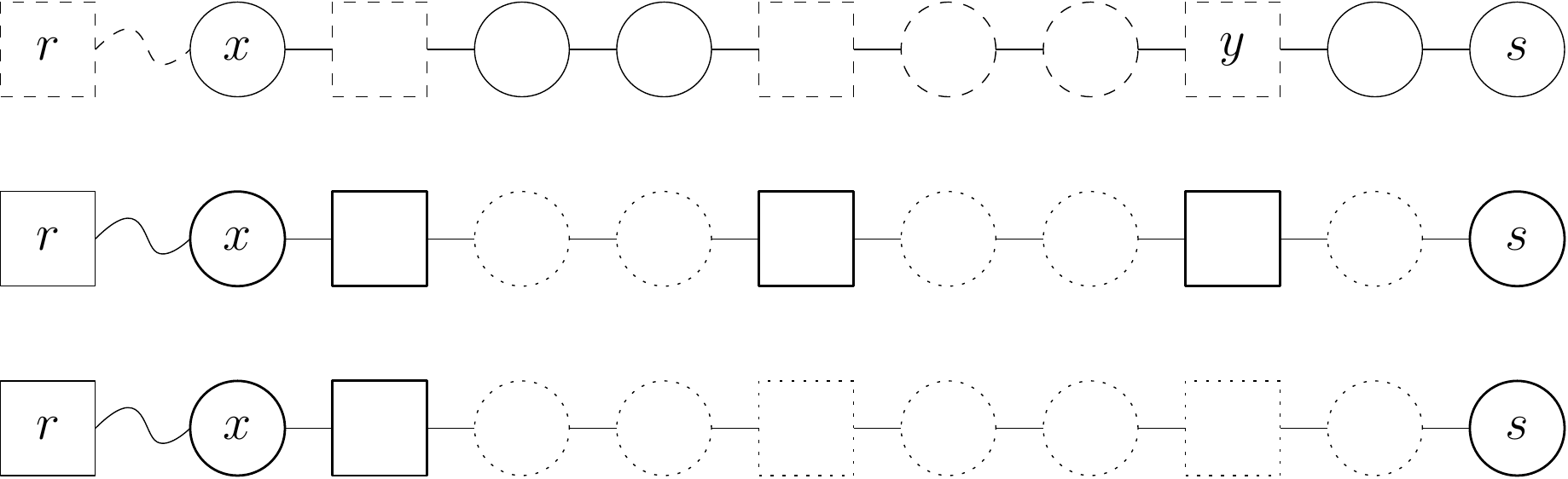}
	\caption{The dashed nodes in the first image mark the bridge and highway from $y$ to $r$. The other images illustrate the two simplification rounds for the $x$-$s$-path, leaving a path of length $2$.}
	\label{fig:highways}
\end{figure}

The idea of extending bags towards the root can be exploited further by adding the vertices in a synchronization bag to all of its descendants. 
We may regard this as giving each node a bridge to the next synchronization node, as well as a {\bf highway} along the synchronization nodes towards the root.
This idea leads to the following construction.

Let $(\tset', \set{B'_i}_i)$ be a modified tree decomposition with $\tset'=\tset$ as before, and
\[
B_v' := B(\{ w \in \tset_{v \leftrightarrow r} \mid w \in \tset_{v \leftrightarrow \sigma(v)} \vee w \text{ is a synchronization node} \}) \enspace .
\]

The size of these bags is at most $k (\lambda + \diam / \lambda)$, which for $\lambda = \diam /k$ gives $\diam+ k^2 = O(\log n + k^2)$.

Notice that the bag $B_r$ is now contained in any bag $B'_i$, so we have some hope that the combinatorial diameter of $(\tset', \set{B'_i}_i)$ is low.
Indeed this is true.

\begin{lemma}
	\label{lem:highways:diam}
	$\tset'$ has combinatorial diameter at most $3$.
\end{lemma}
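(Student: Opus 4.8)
\textbf{Proof plan for \Cref{lem:highways:diam}.}

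The plan is to fix two arbitrary nodes $s,t \in V(\tset')$, let $x$ be their lowest common ancestor in $\tset = \tset'$, and show that the combinatorial length of $\tset'_{s \leftrightarrow x}$ can be reduced to at most one edge, so that the path $\tset'_{s \leftrightarrow t} = \tset'_{s \leftrightarrow x} \cup \tset'_{x \leftrightarrow t}$ has combinatorial length at most $3$ (one edge on each upward half, and then one ``real'' transition at $x$ — actually we will get $\le 1 + 1 + 1$; the gluing is where the extra edge comes from, so we should be careful to land on $3$ rather than $2$, which is consistent with the statement). So the heart of the matter is a one-sided claim: for any descendant $s$ of $x$, $\tset'_{s \leftrightarrow x}$ has combinatorial length at most $1$.

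For the one-sided claim I would walk up the path $P = \tset'_{s \leftrightarrow x} = s = u_0, u_1, \dots, u_m = x$ and classify the internal nodes. First, bypass every internal node that is strictly between $s$ and $\sigma(s)$ (exclusive): for such a node $u_j$, its unique child $u_{j-1}$ on $P$ still lies on the bridge $\tset_{u_{j-1} \leftrightarrow \sigma(u_{j-1})} = \tset_{u_{j-1}\leftrightarrow\sigma(s)}$, and since $B'_{u_{j-1}}$ collects all bags from $u_{j-1}$ up to $\sigma(s)$ along with the synchronization nodes, while $B'_{u_j}$ collects from $u_j$ up to $\sigma(s)$ along with (the same or fewer) synchronization nodes, we have $B'_{u_j} \subseteq B'_{u_{j-1}}$; hence $B_{u_j}' \cap B_{u_{j+1}}' \subseteq B_{u_{j-1}}'$, so $u_j$ is $(s,x)$-redundant. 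This collapses the sub-bridge at $s$. Next, all remaining internal nodes are synchronization nodes $\sigma_1 = \sigma(s), \sigma_2, \dots$ lying on the highway towards $r$ (together with possibly a final short stretch below $x$ if $x$ is not itself a synchronization node — I would handle that stub the same way as the bottom bridge, or just absorb it into the count). The key observation is that for a synchronization node $\sigma_a$ with synchronization parent-direction neighbor $\sigma_{a+1}$ (closer to $x$) and child-direction neighbor $\sigma_{a-1}$ (closer to $s$) on the simplified path, every vertex of $B'_{\sigma_a}$ is either a vertex of $B_{\sigma_a}$ itself or a vertex of some synchronization bag strictly above $\sigma_a$ — but all those higher synchronization bags also appear in $B'_{\sigma_{a+1}}$ (it is farther up, so it sees at least as many synchronization ancestors), and $B_{\sigma_a}$ itself — here is the subtle point — does \emph{not} in general appear in $B'_{\sigma_{a-1}}$. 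So synchronization nodes are \emph{not} individually redundant, which is exactly why the footnote-style non-metric behavior shows up and why we cannot hope for combinatorial length $0$ or always $1$ naively.

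The trick to finish, and the step I expect to be the main obstacle, is to exploit that $B_r \subseteq B'_i$ for \emph{every} $i$, and more strongly that the highway makes the bags nested among the synchronization nodes in the right direction: if $\sigma_a$ is above $\sigma_{a-1}$ then $B_{\sigma_a} \subseteq B'_{\sigma_{a-1}}$ because $\sigma_a$ is a synchronization node that is an ancestor of $\sigma_{a-1}$, hence its bag $B_{\sigma_a}$ is one of the synchronization bags collected into $B'_{\sigma_{a-1}}$. Consequently $B'_{\sigma_a} = B_{\sigma_a} \cup (\text{synchronization bags above } \sigma_a) \subseteq B'_{\sigma_{a-1}}$, because $B'_{\sigma_{a-1}}$ contains $B_{\sigma_a}$ and all synchronization bags above $\sigma_{a-1}$ which in particular include all those above $\sigma_a$. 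This shows $B'_{\sigma_a} \cap B'_{\sigma_{a+1}} \subseteq B'_{\sigma_{a-1}}$, so every synchronization node except the topmost one (the neighbor of $x$ on the simplified path) and the bottommost one ($\sigma(s)$, which is the neighbor of the already-collapsed $s$) can be bypassed. After all these bypasses the path $\tset'_{s\leftrightarrow x}$ has been reduced to $s, \sigma(s), x$ at worst, i.e. combinatorial length $2$ on one side — and symmetrically on the other side — but using that $B_r$ (and in fact the whole highway top segment) sits inside every bag, one more bypass collapses $\sigma(s)$ as well against $x$, yielding length $1$ per side; gluing the two sides at $x$ and accounting for the fact that $x$'s own bag may not be nested into its neighbors gives combinatorial diameter at most $3$. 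The delicate bookkeeping — getting the nesting directions exactly right, handling the case where $x$ is or is not a synchronization node, and making sure the bottom-bridge collapse at $s$ is compatible with the redundancy condition which refers to three \emph{consecutive} nodes — is where the real work lies; the rest is a direct application of \Cref{def:simplification} and \Cref{lem:TreeDecompMonotonicity}.
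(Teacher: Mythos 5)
Your overall strategy matches the paper's — split at the lowest common ancestor $x$, collapse the non-synchronization nodes first, then the synchronization nodes — but the crucial step of bypassing synchronization nodes along the highway rests on a false nesting claim. You write $B'_{\sigma_a} = B_{\sigma_a} \cup (\text{synchronization bags above } \sigma_a)$, which omits the \emph{bridge} component: by the construction of the highway decomposition, $B'_v$ for a synchronization node $v$ also absorbs all of $B(\tset_{v \leftrightarrow \sigma(v)})$, that is, the bags of the non-synchronization nodes strictly between $v$ and the next synchronization node above. Those bridge bags lie on a segment of the root path that $B'_{\sigma_{a-1}}$ does not cover (its own bridge only reaches up to $\sigma_a$ and then jumps to synchronization bags), so $B'_{\sigma_a} \subseteq B'_{\sigma_{a-1}}$ is false in general. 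What does hold — and what the argument needs — is only the redundancy condition $B'_{\sigma_a} \cap B'_{\sigma_{a+1}} \subseteq B'_{\sigma_{a-1}}$: by the connectivity property of the underlying tree decomposition, any vertex lying in both $B'_{\sigma_a}$ and $B'_{\sigma_{a+1}}$ must appear in a bag at or above $\sigma_{a+1}$ on the shared root path, hence in some synchronization bag $\succeq \sigma_{a+1}$, and all such bags are already collected into $B'_{\sigma_{a-1}}$. You cannot avoid this intersection-plus-connectivity argument by appealing to nesting.

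Second, the plan to reduce each upward half $\tset'_{s \leftrightarrow x}$ to a \emph{single} edge (bypassing the topmost synchronization node $\sigma_s$ against $x$) fails when $x$ is not itself a synchronization node: in that case $\sigma_s$ and $x$ share the bridge segment from $x$ up to $\sigma(x)$, so $B'_{\sigma_s} \cap B'_x$ contains non-synchronization bags above $\sigma(s)$ that are \emph{not} in $B'_s$, and $\sigma_s$ is not $(s,t)$-redundant there. The correct endgame is to stop each half at $(s,\sigma_s,x)$, glue to $(s,\sigma_s,x,\sigma_t,t)$, and then bypass only $x$ (which does satisfy $B'_x \cap B'_{\sigma_t} \subseteq B'_{\sigma_s}$), landing on $(s,\sigma_s,\sigma_t,t)$ of combinatorial length $3$. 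Your concluding remark — that "accounting for $x$'s bag not being nested" turns a $2$-edge glued path into a $3$-edge one — is not a valid argument; the bound of $3$ arises because $\sigma_s$ and $\sigma_t$ genuinely cannot both be removed, not from some post-hoc correction. As a minor point, after collapsing only the bridge from $s$ to $\sigma(s)$ the remaining internal nodes are \emph{not} all synchronization nodes; the same bottom-up bypass applies within every bridge segment and should be stated in that generality, as the paper does.
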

\begin{proof}
	As before, we split any $s$-$t$-path at $x$, the lowest common ancestor of $s$ and $t$, and consider only the $s$-$x$-path.
	Every non-synchronization node $v$ on $\tset_{s\leftrightarrow x}$ has a node below it which is either a synchronization node or $s$.
	The bag of that node is a superset of $B'_v$, so all non-synchronization nodes except $s$ and $x$ can be bypassed.
	Call that reduced path $P$.

  Consider the neighbor of $s$ in $P$, which we denote $v$, and assume that $v$ is not the neighbor of $x$ in $P$. %
  Then $v$ must be a synchronization node, and its next node in $P$ is $\sigma(v)$. %
  Now, the intersection $B'_v \cap B'_{\sigma(v)}$ contains exactly all of the bags of synchronization nodes in    $\tset_{\sigma(v)\leftrightarrow r}$, and thus, ${B'_v \cap B'_{\sigma(v)} \subseteq B'_s}$. %
  This implies that $v$ can be bypassed, and by repeating this process, we can bypass every synchronization node except for the neighbor of $x$.

	This gives a possible simplification of $\tset_{s \leftrightarrow t}$ as the path $(s, \sigma_s, x, \sigma_t, t)$, where the $\sigma_s$ and $\sigma_t$ are the synchronization nodes below $x$ on the paths to $s$ and $t$, respectively.
	There is a further reduction of the whole path, since $B_x'$ is precisely $B_{\sigma_s}' \cap B_{\sigma_t}'$.
	This allows us to remove $x$ as well, giving a simplification of length $3$.
\end{proof}

Using the fact that $\diam \in O(\log n)$, and setting $\lambda = \diam /k$ gives a fixed-parameter algorithm that yields a constant-factor approximation:
\begin{corollary}
	\label{cor:highways:algo}
	There exists an algorithm that in time $2^{O(k^2)}\cdot\poly(n)$ computes a factor-$O(1)$ approximation for \scp instances where $G$ has treewidth at most $k$.
\end{corollary}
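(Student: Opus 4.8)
The plan is to instantiate the highway construction with the right value of the parameter $\lambda$ and feed the resulting decomposition into \Cref{thm:algo:main}. First I would invoke a standard balanced tree-decomposition algorithm (e.g.~\cite{Bodlaender88}) to obtain, in polynomial time, a tree decomposition $(\tset, \set{B_i}_i)$ of $G$ with width $O(k)$ and diameter $\diam = O(\log n)$, rooted at an arbitrary node $r$. I would then set $\lambda = \lceil \diam/k \rceil$ (so that $\lambda \in \set{1,\dots,\diam}$) and build the modified decomposition $(\tset', \set{B'_i}_i)$ exactly as in the definition preceding \Cref{lem:highways:diam}, i.e.\ $\tset' = \tset$ and $B'_v$ is the union of the bags on the bridge from $v$ up to its synchronization ancestor $\sigma(v)$ together with the bags of all synchronization nodes on $\tset_{v \leftrightarrow r}$. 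This is clearly computable in polynomial time from $\tset$.

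Second, I would pin down the two parameters of $\tset'$. For the width: each $B'_v$ is a union of at most $\lambda$ bags along the bridge plus at most $\diam/\lambda$ synchronization bags, each of size $O(k)$, so $|B'_v| = O\paren{k(\lambda + \diam/\lambda)}$; with $\lambda = \Theta(\diam/k)$ this is $O(\diam + k^2) = O(\log n + k^2)$, hence $w(\tset') = O(\log n + k^2)$. That $\tset'$ is still a valid tree decomposition follows from monotonicity of the tree-decomposition axioms under enlarging bags along connected subtrees (\Cref{lem:TreeDecompMonotonicity}): every vertex added to $B'_v$ lies in an ancestor bag of $v$, so the set of bags containing any fixed vertex stays connected, while the vertex-cover and edge-cover properties are only helped by adding vertices. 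For the combinatorial diameter, I would simply quote \Cref{lem:highways:diam}, which gives $\Delta(\tset') \le 3$.

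Third, I would apply \Cref{thm:algo:main} to the instance $(G,D,\cp,\dm)$ together with the decomposition $\tset'$. It yields a factor-$O(\Delta(\tset')^2)$ approximation, which is $O(1)$ since $\Delta(\tset')^2 \le 9$, in time $2^{O(w(\tset'))} \cdot \poly(n) = 2^{O(\log n + k^2)} \cdot \poly(n) = 2^{O(k^2)} \cdot \poly(n)$, exactly as claimed.

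The only real work lies in the second step: confirming that $\tset'$ is a genuine tree decomposition and that the width bound is tight enough. Both are routine given \Cref{lem:TreeDecompMonotonicity} and the definition of synchronization nodes, but one must be careful that the bridge contributes only $O(\lambda)$ bags and the highway only $O(\diam/\lambda)$ bags, so that the additive form $k(\lambda + \diam/\lambda)$ — rather than a multiplicative $k \lambda \cdot \diam/\lambda$ — governs the width; this is precisely what makes $\lambda = \diam/k$ the right balance and produces the $k^2$ in the exponent. Everything else is bookkeeping, and the derandomization needed to turn the expected guarantee into a deterministic $O(1)$-approximate cut is already provided by \Cref{obs:derandomisation}.
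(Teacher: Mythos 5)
Your proposal is correct and follows the same route as the paper: compute a balanced tree decomposition, apply the highway construction with $\lambda = \Theta(\diam/k)$ to get width $O(\log n + k^2)$ and combinatorial diameter at most $3$ (\Cref{lem:highways:diam}), and then invoke \Cref{thm:algo:main}. The paper states this in one line; you have simply unpacked the bookkeeping (validity via \Cref{lem:TreeDecompMonotonicity}, the additive $k(\lambda + \diam/\lambda)$ width bound, and the derandomization via \Cref{obs:derandomisation}) in the same way the paper's surrounding text implicitly relies on it.
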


\subsection{Super-Highways}
\begin{figure}
	\includegraphics[width=\textwidth]{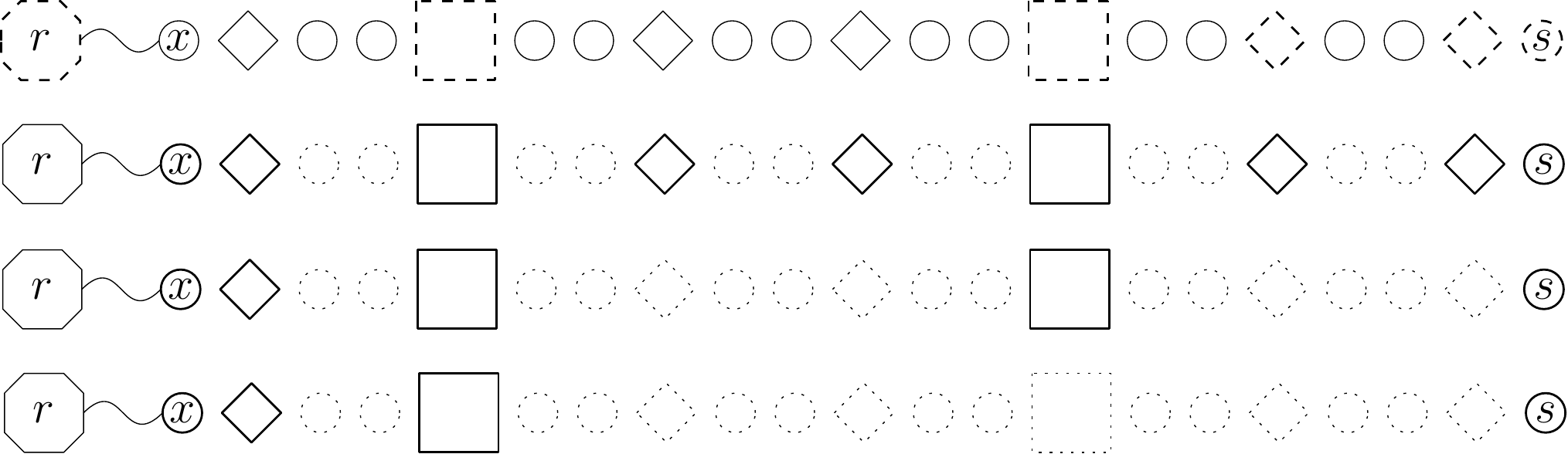}
	\caption{Illustration of an upward path with nodes of layer $-1$ as circles, nodes of layer 0 as diamonds, and nodes of layer $1$ as squares. The root is at some unspecified maximum layer. The dashed nodes in the first image mark the super-highway from $s$ to $r$. The other images illustrate the simplification rounds for the $x$-$s$-path, removing all nodes of some layer in each round, except $x$, $s$, and possibly one node close to $x$.}
	\label{fig:superhighways}
\end{figure}

We can think of the previous construction as having two layers, bridges to synchronization nodes and highways along synchronization nodes to the root. The highways need to cover many synchronization nodes, leading to large bags in $\tset'$.
To improve on this we introduce a network of {\bf super-highways} of different layers, where each layer covers fewer, more spaced-out synchronization nodes on a root-leaf path.
When we connect a node to the root we can then move up the tree layer by layer with increasing speed, decreasing the size of bags in $\tset'$. 
This is payed for by the need for an additional node in path simplifications for moving between layers, giving a trade-off between run time and approximation guarantee.

Let $q \in \mathbb{N}$ be a parameter representing the number of layers. %
For a node $v \in \tset$, we define the layer of $v$, denoted $\pi(v)$, as
\[
	\pi(v) := \max\{-1, \max\{ j\in\{0,\dots, q-1 \} \mid \ell(v) \equiv 0 \mod k^{j/q} \diam / k\} \} \enspace .
\]
By this definition all synchronization nodes are assigned to some non-negative layer, and all other nodes are on layer $-1$.
We now get a new tree decomposition $(\tset', \set{B'_i}_i)$ by constructing bags:
\[
B_v' = B(\{ w \in \tset_{p(v) \leftrightarrow r} \mid \pi(w) = \max\{ \pi(u) \mid u \in \tset_{p(v)\leftrightarrow w}  \} \} \cup \{v\})\enspace .
\]
Informally, we start at some node $v$ and move towards $r$ by first taking all nodes of layer $-1$ until we hit a node of layer $0$, then taking only nodes of layer $0$ until we hit layer $1$, and so on.
The nodes at higher layers are spaced further apart. Thus this process ``speeds up'' thereby generating smaller bags.
To be precise, there are  $q$ layers and at most $k^{1/q}$ nodes of any one layer in a bag, so $\tset'$ has width $O(\diam + qk^{1+1/q})$.

We now show that $(\tset', \set{B'_i}_i)$ has combinatorial diameter depending only on $q$.

\begin{lemma}
\label{lem:superhighways:diam}
	$(\tset', \set{B'_i}_i)$ has combinatorial diameter at most $2q+1$.
\end{lemma}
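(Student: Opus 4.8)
## Proof Plan for Lemma 4.12 (Super-Highways Combinatorial Diameter)

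The plan is to follow the same template as the proofs of \Cref{lem:bridges:diam} and \Cref{lem:highways:diam}: fix two nodes $s,t$, let $x$ be their lowest common ancestor in $\tset'$, and bound the combinatorial length of the two upward paths $\tset'_{s\leftrightarrow x}$ and $\tset'_{x\leftrightarrow t}$ separately, before gluing them at $x$ and performing one final reduction. Since $x$ lies on $\tset_{s\leftrightarrow t}$, the triangle-type inequality used in those proofs applies, so it suffices to show that each upward half-path simplifies to a path on at most $q+2$ nodes (its two endpoints together with at most $q$ interior nodes), and that $x$ can afterwards be bypassed in the glued path; this yields combinatorial length at most $2(q+1)-1 = 2q+1$.

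To simplify an upward path $P = \tset'_{s\leftrightarrow x}$, I would process the layers in increasing order. First, every node $v$ on $P$ with $\pi(v) = -1$ that is neither $s$ nor $x$ can be bypassed: its unique child $u$ on $P$ satisfies $B'_v \subseteq B'_u$, because the record-holder list defining $B'_u$ differs from the one defining $B'_v$ only by the node $v$ itself (the layer $\pi(v) = -1$ is dominated by every layer occurring above it, so it never becomes a ``record''), and hence the redundancy condition $B'_v \cap B'_w \subseteq B'_u$ of \Cref{def:simplification} holds for the parent-side neighbour $w$. After this round, $P$ has been reduced to $s$, a sequence of synchronization nodes, and $x$. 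Next, for $j = 0, 1, \dots, q-1$ in turn, I would bypass all but at most one layer-$j$ node, generalizing the argument in \Cref{lem:highways:diam}: if $v$ is the layer-$j$ node closest to $s$ among those still present, then its child-side neighbour $u$ is either $s$ or a surviving node of layer at least $j+1$, and the only vertices that $B'_v$ shares with its parent-side neighbour $w$ come from the common ``high-layer tail'' of bags running from some ancestor up to the root --- precisely the tail that is already contained in $B'_u$ by construction. Hence $v$ is $(s,t)$-redundant and can be bypassed; iterating, only the layer-$j$ node adjacent to $x$ may survive. After all $q+1$ rounds, the only interior nodes of $P$ that remain are at most one per layer $0, \dots, q-1$, all lying near $x$, so $P$ has been reduced to at most $q+2$ nodes.

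Finally I would glue the two simplified half-paths at $x$ and argue that $x$ itself becomes $(s,t)$-redundant: its two neighbours in the glued path are the highest-layer surviving nodes $\nu_s$ and $\nu_t$ on the $s$- and $t$-sides, and a short tree-decomposition connectivity argument (mirroring the computation ``$B'_x$ is precisely $B'_{\sigma_s} \cap B'_{\sigma_t}$'' in \Cref{lem:highways:diam}) shows $B'_x = B'_{\nu_s} \cap B'_{\nu_t}$; in particular $B'_x \subseteq B'_{\nu_s}$, so $x$ can be bypassed. This leaves a path on at most $2q+2$ nodes, i.e.\ combinatorial length at most $2q+1$, as claimed.

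I expect the main obstacle to be the second step: pinning down exactly which layer-$j$ nodes can be bypassed and deriving the required containment $B'_v \cap B'_w \subseteq B'_u$ directly from the record-holder definition of $B'$. The delicate point is that combinatorial length is \emph{not} a metric on $V(\tset)$ (cf.\ the footnote in \Cref{sec:intro} of this section), so one must verify that the bypass operations carried out for one layer do not spoil those needed for the next, and carefully control the single leftover node per layer near $x$, as well as the degenerate cases (where $x$ itself has high layer, or the half-path does not traverse all $q$ layers). The remaining steps --- the reduction to upward half-paths via the lowest common ancestor, the layer-$(-1)$ round, and the final removal of $x$ --- are routine adaptations of the bridges and highways arguments.
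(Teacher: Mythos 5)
Your plan follows the paper's proof almost exactly: split at the lowest common ancestor $x$, reduce each upward half-path layer by layer (starting with layer $-1$, then layers $0,\dots,q-1$), keeping one survivor per layer, and finally bypass $x$ after gluing; the key technical step—bounding $B'_v\cap B'_w$ by a union of high-layer bags that $B'_u$ already contains, using the record-holder structure—is also what the paper does. The only slip is cosmetic: the two neighbours of $x$ in the glued path are the surviving nodes \emph{closest} to $x$, which need not be the highest-layer survivors (and, for the same reason, the surviving $\sigma_j$'s need not appear in decreasing layer order near $x$), but since $B'_x$ is contained in $B'_{\nu}$ for any surviving descendant $\nu$ whose record-holder walk passes through $x$'s, the redundancy of $x$ and the final bound $2q+1$ are unaffected.
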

\begin{proof}
	As before, we only show that any upward path from $s$ to $x$ has combinatorial length at most~$q+1$.
	We need to perform a round of reductions for every layer, with the goal of leaving only $s$, $x$, as well as the first node of at least that layer below $x$.
	For layer $-1$, this holds with the same argument as before.

	We can now proceed by induction, fixing some layer $i$ and assuming that the $s$-$x$-path $P$ has been reduced to contain only $s$, then nodes of layers $\geq i$, followed by a sequence $(\sigma_{i-1}, \sigma_{i-2}, \ldots, \sigma_0, x)$, where each node $\sigma_j$ is in layer $j$. Here, we assume w.l.o.g.\ that $x$ is at layer $-1$.
	Now consider any node $v$ of layer $i$, except the one closest to $x$.
	Because its neighbors also have level at least $i$ (or are~$s$), the intersection of their bags can be represented as the union of bags of $\tset$ whose layer is at least $i$.
  Let~$w$ be the predecessor of $v$ on $s$-$x$-path $P$.
	The set $B_w'$ is constructed from some upward path starting at $w$, containing only nodes of non-decreasing layer.
	This upward path hits layer~$i$ between~$w$ and~$v$, but not layer $i+1$ since a node of layer $i+1$ would be on $P$ between $w$ and~$v$.
	So then~$B_w'$ covers all nodes of layer at least $i$ that $B_v'$ covers, and therefore $v$ can be bypassed, concluding induction.

	The simplification of $\tset_{s\leftrightarrow x}$ produced in this fashion is a path $(s, \sigma_{q-1}, \dots, \sigma_0, x)$, where $\pi(\sigma_i) = i$.

	If we add the same simplification for $\tset_{t\leftrightarrow x}$ we get a simplification for $\tset_{s\leftrightarrow t}$ that takes the form $(s, \sigma_{q-1}, \dots, \sigma_0, x, \sigma_0',\dots,\sigma_{q-1}',t)$.
	As before $x$ can be bypassed since its bag is the intersection of the bags of $\sigma_0$ and $\sigma_0'$.
	Thus any $s$-$t$-path in $\tset'$ has combinatorial length at most $2q+1$.
\end{proof}

This implies the existence of the following algorithms.
\begin{corollary}
\label{lem:superhighways:algo}
	There exists an algorithm that, for any $q\in \mathbb{N}$, computes a factor-$O(q^2)$ approximation for \scp in time $O(2^{qk^{1+1/q}})\cdot\poly(n)$.
	Taking $q = \log k$ gives a factor-$O(\log^2k)$ approximation in time $2^{O(k\log k)}\cdot\poly(n)$.
\end{corollary}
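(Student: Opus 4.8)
The plan is to simply assemble results already established. First I would invoke \cite{Bodlaender88} to obtain, in polynomial time, a tree decomposition $(\tset,\set{B_i}_i)$ of $G$ with $\diam = O(\log n)$ and width $O(k)$, and fix a root $r$. Given the parameter $q \in \mathbb{N}$, I then construct the super-highway decomposition $(\tset',\set{B'_i}_i)$ exactly as defined above (the layers $\pi(\cdot)$ and the bags $B'_v$). This construction is clearly polynomial-time, and $(\tset',\set{B'_i}_i)$ is a genuine tree decomposition by \Cref{lem:TreeDecompMonotonicity}.

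The next step is to record the two parameters of $\tset'$. For the combinatorial diameter, \Cref{lem:superhighways:diam} directly gives $\Delta(\tset') \le 2q+1$. For the width, I would argue that along the upward path defining $B'_v$ the running maximum layer is non-decreasing, so once it reaches value $j$ we add only nodes of layer exactly $j$, and only until we meet a node of layer $\ge j+1$; since layer-$(\ge j)$ nodes occur every $k^{j/q}\diam/k$ levels and layer-$(\ge j{+}1)$ nodes every $k^{(j+1)/q}\diam/k$ levels, at most $k^{1/q}$ layer-$j$ nodes fall in that stretch. Combined with the $O(\diam/k)$ layer-$(-1)$ nodes (spanning at most $\diam/k$ levels) and the fact that every bag of $\tset$ has size $O(k)$, this gives $w(\tset') = O(\diam + qk^{1+1/q})$.

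Finally I would apply \Cref{thm:algo:main} to $(\tset',\set{B'_i}_i)$: it yields a factor-$O(\Delta(\tset')^2) = O((2q+1)^2) = O(q^2)$ approximation for \scp in time $2^{O(w(\tset'))}\poly(n) = 2^{O(\log n + qk^{1+1/q})}\poly(n) = 2^{O(qk^{1+1/q})}\poly(n)$, where I use $\diam = O(\log n)$ and absorb the $2^{O(\log n)}$ factor into $\poly(n)$; this is the first claim. For the second, set $q = \lceil \log k \rceil$; then $k^{1/q} = O(1)$ (indeed $k^{1/\log k} = 2$), so $qk^{1+1/q} = O(k\log k)$, giving a factor-$O(\log^2 k)$ approximation in time $2^{O(k\log k)}\poly(n)$. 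The only point requiring any care — and it is bookkeeping rather than a real obstacle — is the width count: verifying that each of the $q$ layers contributes only $k^{1/q}$ nodes to a bag before the layer index strictly increases, which is precisely the spacing computation above; rounding $\log k$ to an integer affects only hidden constants.
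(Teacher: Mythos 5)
Your proof is correct and follows essentially the same approach as the paper, which presents this corollary without an explicit proof because it follows immediately from the pieces already in place: building the super-highway decomposition on top of a Bodlaender-style decomposition, invoking \Cref{lem:superhighways:diam} for the combinatorial diameter, the width accounting $O(\diam+qk^{1+1/q})$ (the paper states this directly; your spacing computation is the correct way to verify it), and applying \Cref{thm:algo:main}. Your handling of $q=\lceil\log k\rceil$ is the right way to make the second claim precise.

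One small remark, more about the paper than about your proof: the stated running time $O(2^{qk^{1+1/q}})\cdot\poly(n)$ should really read $2^{O(qk^{1+1/q})}\cdot\poly(n)$, since the bound comes from $2^{O(w(\tset'))}\poly(n)$ and the hidden constants land in the exponent; your derivation produces the correct form.
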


\begin{acks}
Parinya Chalermsook has been supported by European Research Council (ERC) under the
European Union’s Horizon 2020 research and innovation programme (grant
agreement No. 759557) and by Academy of Finland Research Fellowship, under
grant number 310415. %
Joachim Spoerhase has been partially supported by European Research Council (ERC) under the European Union’s Horizon 2020 research and innovation programme (grant agreement No. 759557). %
Daniel Vaz has been supported by the Alexander von Humboldt Foundation with
funds from the German Federal Ministry of Education and Research (BMBF).”
\end{acks}

\bibliographystyle{abbrv}
\bibliography{references}

\appendix
\section{Various Lemmas}

\makeatletter
\newcommand{\oset}[3][0ex]{%
  \mathbin{\mathop{#3}\limits^{
    \vbox to#1{\kern-2\ex@
    \hbox{$\scriptstyle#2$}\vss}}}}
\makeatother
\newcommand{\mirror}[1]{\ensuremath{\oset{\leftrightarrow}{#1}}}

\def\vecsign{\raise-0.867ex\hbox{$\mathchar"017E$}}
\newcommand{\dvec}{\normalfont\mathrel{\ooalign{\vecsign\cr\hidewidth\raise1ex\hbox{\rotatebox{180}{\vecsign}}\cr}}}
\renewcommand{\mirror}[1]{\ensuremath{\oset[-0.2ex]{\dvec}{#1}}}

\begin{definition}
	For any set $X$ and $X$-assignment $f$ we define the \emph{mirror} of $f$ to be $\mirror{f} := \sigma \circ f$, where $\sigma(0) = 1$ and $\sigma(1) = 0$.
\end{definition}

Notice that the mirror of an assignment represents the same cut; it merely exchanges the labels.
The approximation ratio analysis of \chlamtacetal requires the distributions to be symmetric in their labeling, in particular $\Pr[f(v) = 0] = \Pr[f(v) = 1]$ $\forall v$.
They resolve this by demanding symmetry via the Sherali-Adams LP which can be shown to not worsen the relaxation.
Using the following lemma, we are able to prove that the rounding analysis also holds in the non-symmetric case.

\begin{lemma}
	\label{lem:symmetrization}
	Let $G, (\tset, \set{B_i}_{i}), \{\mu_L\}$ be the input of \autoref{alg:Chlamtac} and $f$ the assignment computed by it.
	Consider the modified decomposition $(\tset, \set{B'_i}_i)$, where a dummy vertex $e$ has been added to every bag $B'_i$.

	Then for each $ \mu_L$ and $L' = L \cup \{e\}$ there exists a $\mirror{\mu}_{L'}$ with $\Pr_{ f \sim \mirror{\mu}_{L'} }[ f = f' ] = \Pr_{ f \sim \mirror{\mu}_{L'} }[ f = \mirror{f'} ]$ for all $f' \in \mathcal{F}[L']$ such that when \autoref{alg:Chlamtac} is run on $G, (\tset, \set{B'_i}_i), \{\mirror{\mu}_{L'}\}$ the resulting assignment $f^*$ satisfies
	\[
		\Pr[f = f'] + \Pr[f = \mirror{f'}] = \Pr[f^*|_{V(G)} = f'] + \Pr[f^*|_{V(G)} = \mirror{f'}] \; \forall f' \in \mathcal{F}[V(G)] \enspace.
	\]
\end{lemma}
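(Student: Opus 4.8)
The plan is to construct the symmetrized distributions $\mirror{\mu}_{L'}$ by explicitly coupling the original distribution with its mirror, and then to argue that running \autoref{alg:Chlamtac} with these distributions is equivalent to first flipping a global fair coin (recorded in the dummy vertex $e$) and then running the original algorithm, possibly composed with a global label flip. First I would define, for each $L' = L \cup \{e\}$, the distribution $\mirror{\mu}_{L'}$ on $\mathcal{F}[L']$ as follows: sample $g \sim \mu_L$ and an independent fair bit $b \in \{0,1\}$; output the assignment that agrees with $g$ on $L$ and sets $e \mapsto 0$ if $b = 0$, and outputs $\mirror{g}$ on $L$ with $e \mapsto 1$ if $b = 1$. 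Equivalently, $\Pr_{\mirror{\mu}_{L'}}[f'] = \tfrac12 \Pr_{\mu_L}[f'|_L]$ when $f'(e) = 0$ and $\tfrac12 \Pr_{\mu_L}[\mirror{f'|_L}]$ when $f'(e) = 1$. One checks immediately that $\Pr_{\mirror{\mu}_{L'}}[f'] = \Pr_{\mirror{\mu}_{L'}}[\mirror{f'}]$, since mirroring flips $e$ and simultaneously replaces $g$ by $\mirror{g}$, which only swaps the roles of the two halves of the mixture. I also need to verify the \emph{consistency} requirement $\mirror{\mu}_{L'}|_{L'\cap L''} = \mirror{\mu}_{L''}|_{L'\cap L''}$: since $e$ lies in every bag, $L' \cap L'' = (L\cap L'') \cup \{e\}$, and marginalizing the above formula onto this set just replaces $\mu_L$ by $\mu_L|_{L\cap L''} = \mu_{L''}|_{L\cap L''}$ by consistency of the original family, so the two symmetrized marginals coincide.

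Next I would analyze the run of \autoref{alg:Chlamtac} on the symmetrized input. The key observation is that the dummy vertex $e$ belongs to $B'_0$ and to every bag, so the connectivity condition for $e$ is the whole tree; hence $f^*(e)$ is sampled once, at bag $B'_0$, from the marginal $\mirror{\mu}_{B'_0}|_{\{e\}}$, which is the uniform bit, and $e$ is never re-sampled. Condition on the event $f^*(e) = 0$. Under this conditioning, the conditional transition distribution used at each bag $B$ — namely $\Pr_{f^*\sim \mirror{\mu}_B}[\,\cdot \mid f^*|_{B^+} = f|_{B^+}, f^*(e) = 0\,]$ — is, by construction of $\mirror{\mu}_B$, exactly the original conditional transition distribution $\Pr_{f^*\sim \mu_{B\setminus\{e\}}}[\,\cdot \mid f^*|_{B^+\setminus\{e\}} = f|_{B^+\setminus\{e\}}\,]$ (the $b=0$ branch of the mixture is precisely $\mu_{B\setminus\{e\}}$, and conditioning on $f^*(e)=0$ selects that branch; consistency ensures the conditioning set $B^+$ intersected with the non-$e$ part matches what the original algorithm conditions on). Therefore, conditioned on $f^*(e)=0$, the restriction $f^*|_{V(G)}$ is distributed exactly as $f$. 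Symmetrically, conditioned on $f^*(e)=1$, the $b=1$ branch is selected, which is $\mu$ composed with a global label flip; so $f^*|_{V(G)}$ is distributed as $\mirror{f}$. Combining the two cases with weights $\tfrac12$ each gives, for every $f' \in \mathcal{F}[V(G)]$,
\[
\Pr[f^*|_{V(G)} = f'] \;=\; \tfrac12\Pr[f = f'] + \tfrac12\Pr[f = \mirror{f'}],
\]
and applying the same identity to $\mirror{f'}$ (and using $\mirror{\mirror{f'}} = f'$) and adding the two yields exactly the claimed equation $\Pr[f=f'] + \Pr[f=\mirror{f'}] = \Pr[f^*|_{V(G)}=f'] + \Pr[f^*|_{V(G)}=\mirror{f'}]$.

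The main obstacle I anticipate is the bookkeeping around the conditioning sets: I need to be careful that adding $e$ to every bag does not change the set $B^+ = B \cap B'$ of already-fixed vertices in a way that breaks the correspondence with the original algorithm, and that the claimed equality of conditional distributions genuinely follows from the consistency of the $\{\mu_L\}$ family rather than requiring a fresh hypothesis. Since $e$ is in $B_0$, it is always in $B^+$ once processing has begun, so conditioning on $f|_{B^+}$ always includes conditioning on $f(e)$, which is what makes the mixture "collapse" to a single branch; the non-$e$ part of $B^+$ is exactly the intersection $B \cap B'$ in the original decomposition, so the transition rule matches. The other point requiring care is that \autoref{lem:traverselInv}-type invariance is implicitly used to start the traversal at a bag containing $e$ — but since $e$ is in \emph{every} bag this is automatic. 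With these checks in place the argument is complete.
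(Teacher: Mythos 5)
Your proof takes essentially the same route as the paper: define $\mirror{\mu}_{L'}$ as the $\tfrac12$-$\tfrac12$ mixture of $\mu_L$ (with $e\mapsto 0$) and its mirror (with $e\mapsto 1$), note that $f^*(e)$ is a uniform bit sampled once and thereafter always conditioned on because $e$ lies in every bag, and conclude that conditionally on $f^*(e)=0$ the run reproduces the original algorithm while on $f^*(e)=1$ it reproduces its mirror. Your write-up is somewhat more careful than the paper's in one respect worth noting: you explicitly verify that the family $\{\mirror{\mu}_{L'}\}$ still satisfies the marginal-consistency requirement needed for \autoref{alg:Chlamtac} to be well-defined, a point the paper leaves implicit; otherwise the two proofs coincide.
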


The content of the Lemma is at its core not very surprising.
If we do not care about the labels, we do not care about whether the algorithm outputs $f$ or $\mirror{f}$.
But if that is the case, the distributions also should not need to maintain some distinction between the labels.
In fact, one could run the algorithm unmodified, and then permute the labels with probability $1/2$.
Clearly this does not change the distribution over cuts, and the choice of labels is now symmetric.

\begin{proof}
	To make this formal we shall use the value of $f(e)$ to indicate whether or not we are permuting the labels.
	Consider the following definition:
	\[
	\mirror{\mu}_{L\cup \{e\}} (f', e \to 0) = \frac{1}{2}\mu_L(f'), \; \mirror{\mu}_{L\cup \{e\}} (f', e \to 1) = \frac{1}{2}\mu_L(\mirror{f'}) \; \forall f' \in \mathcal{F}[L] \enspace.
	\]
	This definition describes a distribution with the desired symmetry property.
	By \Cref{lem:traverselInv} we can model the rounding algorithm for $G, (\tset, \set{B'_i}_{i}), \{\mirror{\mu}_{L'}\}$ as choosing first a value for $f(e)$, and then proceeding in the same order as the rounding over $G, (\tset, \set{B_i}_{i}), \mu_{L}\}$.
	With probability $1/2$ we get ${f(e) = 0}$. Since every bag contains $e$, $e$ is always conditioned on, so the symmetrized algorithm performs the exactly as the original run would.
	Meanwhile if $f(e) = 1$, the symmetrized algorithm samples some intermediate assignment $f'$ with exactly the probability that the original algorithm would have sampled $\mirror{f'}$.
	This gives
	\begin{align*}
		&\Pr[f^*|_{V(G)} = f'] = \frac{1}{2}\Pr[f = f'] + \frac{1}{2}\Pr[\mirror{f} = f'] \; &\forall f' \in \mathcal{F}[V(G)] \\
		\implies &\Pr[f^*|_{V(G)} = f'] + \Pr[f^*|_{V(G)} = \mirror{f'}] = \Pr[f = f'] + \Pr[f = \mirror{f'}] \; &\forall f' \in \mathcal{F}[V(G)].
	\end{align*}
\end{proof}

Notice that while we could construct the symmetrized $\mirror{\mu}$ efficiently, we do not need to.
The mere existence of a symmetrized set of distributions is sufficient for our purposes.
The analysis of the Markov flow graphs in \Cref{sec:markov} requires symmetry, but by the lemma above we can assume symmetry without loss of generality.
The result then also holds for the non-symmetric case since the probability that an edge is cut is symmetric in the labels by
\[
\Pr[f(s) \not = f(t)] = \Pr[f(s) = 1 \wedge f(t) = 0] + \Pr[f(s) = 0 \wedge f(t) = 1] \enspace .
\]

\begin{lemma}
	\label{lem:TreeDecompMonotonicity}
	Let $(\tset,\{B_i\}_{i\in V(G)})$ be a tree decomposition of a graph $G$, rooted at $r$.
	Then ${(\tset, \{B_i'\}_{i\in V(G)})}$ is also a tree decomposition of $G$ if $B_i \subseteq B_i' \subseteq B_i \cup B_{p(i)}'$.
\end{lemma}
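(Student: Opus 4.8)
The plan is to verify directly the three defining properties of a tree decomposition for $(\tset,\{B_i'\})$, using the hypothesis $B_i\subseteq B_i'\subseteq B_i\cup B_{p(i)}'$ together with the fact that $(\tset,\{B_i\})$ is already a tree decomposition. Two of the three properties are essentially immediate. For the vertex-covering property, $B_i\subseteq B_i'$ gives $\bigcup_i B_i'\supseteq\bigcup_i B_i=V(G)$, while a straightforward induction on the depth $\ell(i)$ of $i$ (using $B_i'\subseteq B_i\cup B_{p(i)}'$, with the root read as $B_r'=B_r$, i.e. $B_{p(r)}'=\emptyset$) shows $B_i'\subseteq V(G)$ for every $i$; hence $\bigcup_i B_i'=V(G)$. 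For the edge-covering property, any edge $uv\in E(G)$ lies in some bag $B_t$ with $\{u,v\}\subseteq B_t\subseteq B_t'$, so it lies in $B_t'$ as well. Note that throughout, the hypothesis constrains $B_i'$ only in terms of the \emph{already modified} parent bag $B_{p(i)}'$, so all inductions are run from the root downward.

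The only real content is the connectivity property: for each $v\in V(G)$, the set $T_v':=\{i\in V(\tset): v\in B_i'\}$ must induce a connected subtree of $\tset$. Write $T_v:=\{i: v\in B_i\}\subseteq T_v'$, which is connected since $(\tset,\{B_i\})$ is a tree decomposition. The key claim, which I would prove by induction on $\ell(i)$, is the following: for every $i\in T_v'$, the path in $\tset$ from $i$ toward the root $r$ meets $T_v$, and if $j$ denotes the first node of $T_v$ on this path, then the subpath from $i$ to $j$ lies entirely in $T_v'$. Indeed, if $v\in B_i$ we take $j=i$; otherwise $v\in B_i'\setminus B_i$, and $B_i'\subseteq B_i\cup B_{p(i)}'$ forces $v\in B_{p(i)}'$, so $p(i)\in T_v'$; applying the inductive hypothesis to $p(i)$ and prepending $i$ gives the claim. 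The base case $i=r$ uses $B_r'=B_r$, so $v\in B_r'$ implies $r\in T_v$.

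Granting this claim, connectivity of $T_v'$ follows: given $i,i'\in T_v'$, go up from $i$ within $T_v'$ to some $j\in T_v$, then across $T_v$ (which is connected and contained in $T_v'$) to the node $j'\in T_v$ obtained from $i'$, then down to $i'$ within $T_v'$; the union of these three paths is a connected subgraph of the tree $\tset$, contained in $T_v'$ and containing both $i$ and $i'$, hence it contains the unique $i$--$i'$ path in $\tset$. Therefore $T_v'$ is connected, and $(\tset,\{B_i'\})$ is a tree decomposition of $G$.

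I expect the main (and essentially only) obstacle to be the connectivity argument, specifically the bookkeeping that the upward walk from a node of $T_v'$ both stays inside $T_v'$ and actually reaches $T_v$; the latter relies on the boundary convention that the root bag is unchanged (equivalently, that $B_{p(r)}'$ is interpreted as $\emptyset$). Once that upward-walk claim is set up carefully, the remaining steps are routine monotonicity observations, which is why the excerpt defers the verification to this lemma rather than carrying it out explicitly for each construction.
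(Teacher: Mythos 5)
Your proposal is correct and, for the only substantive point (connectivity of $T_v'$), follows exactly the paper's argument: a root-downward induction showing each node of $T_v'$ is joined to $T_v$ by an upward path staying inside $T_v'$, with the base case resting on $B_r'=B_r$. You additionally spell out the vertex- and edge-covering properties, which the paper treats as immediate, but the approach is the same.
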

\begin{proof}
	Fix some $s \in V(G)$. We need to show that $\tset'_s := \{i \in V(\tset) \mid s \in B_i'\}$ is connected.
	As $\tset_s := \{i \in V(\tset) | s \in B_i\}$ is connected and $\tset_s \subseteq \tset'_s$, it is sufficent to show that any $i\in \tset'_s$ is connected to $\tset_s$ in $\tset'_s$.
	We do this by induction over the distance of $i$ to the root.

	For $i = r$ we have $B'_r = B_r$, so either $i \not \in \tset'_s$ or $i \in \tset_s$.
	Otherwise, consider some $i \in \tset'_s$, so $s \in B_i \cup B'_{p(i)}$.
	Then we either have $s \in B_i$, in which case we are done, or $s \in B_{p(i)}'$.
	But this gives $p(i) \in \tset'_s$, and $p(i)$ is closer to $r$ than $i$.
	Thus we can assume that $p(i)$ is connected to $\tset_s$, and hence~$i$ is also connected to $\tset_s$ via $p(i)$.
\end{proof}

\end{document}